\newif\ifec
\keywords{Causal Inference, Experimental Design, Sample Size, Statistical Decision Theory, Treatment Heterogeneity}
\title{Minimax-Regret Sample Selection in Randomized Experiments}
\author{Yuchen Hu}
\affiliation{%
  \institution{Stanford University}
  \city{Stanford}
  \state{CA}
  \country{USA}
}
\author{Henry Zhu}
\affiliation{%
  \institution{Stanford University}
  \city{Stanford}
  \state{CA}
  \country{USA}
}
\author{Emma Brunskill}
\affiliation{%
  \institution{Stanford University}
  \city{Stanford}
  \state{CA}
  \country{USA}
}
\author{Stefan Wager}
\affiliation{%
  \institution{Stanford University}
  \city{Stanford}
  \state{CA}
  \country{USA}
}
\begin{abstract}
Randomized controlled trials are often run in settings with many subpopulations that may
have differential benefits from the treatment being evaluated. We consider the problem of sample
selection, i.e., whom to enroll in a randomized trial, such as to optimize welfare in a heterogeneous population.
We formalize this problem within the minimax-regret framework, and derive optimal sample-selection
schemes under a variety of conditions.
Using data from a COVID-19 vaccine trial, we also highlight how different objectives and decision rules
can lead to meaningfully different guidance regarding optimal sample allocation.
\end{abstract}
\theoremstyle{plain}
\newtheorem{proposition}{Proposition}[section]
\newtheorem{lemma}[proposition]{Lemma}
\newtheorem{theorem}[proposition]{Theorem}
\theoremstyle{definition}
\newtheorem{definition}{Definition}
\theoremstyle{remark}
\title{Minimax-Regret Sample Selection\\ in Randomized Experiments}
\author{Yuchen Hu \and Henry Zhu \and Emma Brunskill \and Stefan Wager}
\date{Stanford University}
\tikzstyle{procedure} = [rectangle, rounded corners, 
\tikzstyle{procedure_larger} = [rectangle, rounded corners, 
\tikzstyle{procedure_grey} = [rectangle, rounded corners, 
\tikzstyle{procedure_larger_grey} = [rectangle, rounded corners, 
\tikzstyle{procedure_ellipse} = [ellipse, rounded corners, 
\tikzstyle{arrow} = [thick,->,>=stealth]
\tikzstyle{arrow_darkred} = [->, >=stealth, line width=2pt, color=darkred]
\definecolor{darkred}{RGB}{139,0,0} 
\begin{document}

\ifec

\maketitle

\else

\maketitle

\begin{abstract}
Randomized controlled trials are often run in settings with many subpopulations that may
have differential benefits from the treatment being evaluated. We consider the problem of sample
selection, i.e., whom to enroll in a randomized trial, such as to optimize welfare in a heterogeneous population.
We formalize this problem within the minimax-regret framework, and derive optimal sample-selection
schemes under a variety of conditions.
Using data from a COVID-19 vaccine trial, we also highlight how different objectives and decision rules
can lead to meaningfully different guidance regarding optimal sample allocation.
\end{abstract}

\fi

\section{Introduction}
\label{sec:intro}

Results from randomized controlled trials (RCTs) are often used to guide decision rules,
especially in medical settings. It is thus natural to ask: How should a researcher
design an RCT if their primary objective is to optimize for the expected welfare achieved
via the induced decision rule? How should they reason about whom to enroll in the study?

The goal of this paper is to flesh out and investigate properties and tradeoffs for
welfare-optimizing RCT designs in heterogeneous populations, i.e., in populations
where different decision rules (also often referred to as treatment strategies) may be appropriate for different subgroups.
There is a large existing  literature that gives guidelines on {\it how many} study participants to
enroll in a RCT. Such sample size calculations are usually driven by power considerations,
and aim to guarantee statistically significant detection of the average treatment effect
whenever this effect is larger than a practically relevant  threshold
\citep{altman1980statistics,moher1994statistical}.
More recently, \citet{manski2016sufficient,manski2019trial} revisited sample size calculations
with a focus on welfare considerations, and showed that relatively small sample sizes can still
yield strong welfare guarantees under a minimax-regret criterion.
\citet{azevedo2020b,azevedo2023b} discuss properties of Bayesian rules for choosing the sample
size. Perhaps surprisingly, however, the question of {\it whom} to enroll, or how to allocate a
given number of spots in a study across different eligible groups, has not received similar attention.

In this paper, we consider sample selection for RCTs under the following simple model.
The population of eligible study participants can be divided into $g = 1, \, \ldots, \, G$
groups, each with its own treatment effect $\tau_g$. The researcher has budget to run an
RCT with $N$ total study participants, and needs to choose how many people $n_g$ to enroll from each
group $g$.  The researcher will use the data from the RCT to propose group-specific treatment rules. 
The researcher's ultimate goal is to achieve robust welfare guarantees and they seek to choose
$n_g$ in line with this goal.  
For example, consider a company who wishes to make product design decisions and is not interested in precise estimates of treatment effects but instead in using a limited experimental budget to allocate samples to customer subgroups in order to  optimize for downstream outcomes under an induced decision policy. Formally, as in \citet{manski2016sufficient,manski2019trial} we specify 
the objective through the lens of minimax regret \citep{savage1951theory,manski2004statistical}.
We provide further details on our model in Section \ref{sec:minimax_selection}.

Perhaps our most surprising finding is as follows. First let $U_g$ denote the $g$-th group's expected utility.
Suppose we can assign different treatments to different groups,
and seek to optimize the weighted average utility \smash{$U = \sum_{g = 1}^G \alpha_g U_g$} where the
weights $\alpha_g$ are proportional to the size of each group in the population.\footnote{This aggregated
utility objective is a scaled version of utilitarian social welfare that weights everyone in the population
equally \citep{moulin2004fair}.} Then, the minimax-regret study design allocates samples to the $g$-th group proportionally
to \smash{$\alpha_g^{2/3}N$}. In particular, this recommendation deviates from the conventional practice of selecting samples
in a stratified experiment in proportion to each group's share in the population
\citep{singh2013fundamentals,sharma2017pros}, and instead over-samples minority groups.
In Section \ref{sec:other_selections}, we discuss reasons for this finding. We also prove
that more familiar sample-allocation schemes, like proportional sampling, are   
minimax optimal under alternate decision theoretic framings. We also consider a hypothetical
application of our method guided by historical COVID-19 trial data, which illustrates the notably different sample allocations that arise under different framings (Section~\ref{sec:case_study}). Overall, our results
highlight how different welfare-focused objectives have different implications as to optimal
sample selection.

\subsection{Related work}

There has been considerable work on methods for learning treatment rules in
heterogeneous populations, using data from either randomized or observational studies
\citep{athey2021policy,kitagawa2018should,manski2004statistical,stoye2009minimax,swaminathan2015batch,zhao2012estimating}.
These methods assume that the researcher has access to a dataset of a given size,
and then seeks to use the available data to discover heterogeneity patterns and learn
treatment assignment rules. For example, \citet{ross2023estimated} use electronic-health-record
data from the US Department of Veterans Affairs to study heterogeneity in how different groups
of patients with suicide ideation or recent suicide attempts respond to hospitalization,
and argue that decision rules that leverage such heterogeneity could substantially improve outcomes.
Here, in contrast, we do not take the dataset as given, and  instead seek
to provide guidance on data collection with heterogeneous study populations.

Our research question falls within literature on using decision theory
to guide the design of RCTs. As noted above,
\citet{manski2016sufficient,manski2019trial} and \citet{azevedo2020b,azevedo2023b}
use decision theoretic criteria to choose the sample size in an RCT.
Another line of work compares different randomization rules over the treatment and control sample allocation in terms of the resulting
variance for the average treatment effect, and seeks intervention randomization rules with
minimax variance \citep{bai2021randomize,kallus2021optimality,li1983minimaxity,wu1981robustness}.
\citet{banerjee2020theory} study the value of randomization itself under a number
of different decision theoretic frameworks. However, we are not aware of prior
work in this literature that considers the question of sample allocation across groups.

Finally, our work also relates to the broader literature on fairness in data analysis.
RCTs have often been run in blatantly unfair ways. For example, until recently, a substantial fraction of medical
research in the United States was conducted on white men, while excluding women and racial
minorities \citep{dresser1992wanted}; and FDA-approved trials still under-sample black
participants relative to their share of the population \citep{alsan2022representation}.
However, while it is easy to agree that some ways of running randomized trials are unfair, it can be
much harder to say what makes a trial fair. For example, if there is a minority subgroup that has
the potential for both disproportionate benefits and disproportionate risks from participating
in a study, what considerations should go into deciding how to fairly include them in the study?
In such settings results on welfare-driven sample selection may help provide useful insights
into potential desiderata for fair RCT design. Similar considerations are also relevant to
fairness-aware data collection when training machine learning and AI systems \citep{holstein2019improving}.

\section{Minimax-Regret Sample Selection}
\label{sec:minimax_selection}

We start with a standard analysis in a stratified completely randomized controlled trial where participants within a stratum receive treatment or control in a 1:1 ratio. The experimenter has the freedom to enroll a maximum of $N$ participants from $G$ groups, with $n_g$ participants allocated to group $g$, $g=1,\dots, G$. In other words, the class of feasible sample selection $\mathcal{N}$ is composed of all $\bfn=(n_1,\dots,n_G)^\top$ such that $\sum_{g=1}^G n_g\le N$, $n_g\in 2\ZZ_{\ge 0}, \forall g$.

Given the sample selection $\bfn$, the experimenter conducts the experiment, and collects data $D=\cb{Y_g,W_g}_{g=1,\dots,G}$ on all of the groups from the experiment, in which $D$ is generated from some data-generating process $\mathcal{D}(\bfn,\tau)$ parameterized by $\tau$, where $Y_g=(Y_{g,1},\dots,Y_{g,n_g})^\top$ is the vector of observed outcomes and $W_g=(W_{g,1},\dots,W_{g,n_g})^\top$ is the vector of treatment assignments. Following the classic potential outcome framework,  under the SUTVA condition \citep{imbens2015causal}, the observed outcome $Y_{g,i}$ can be written as $Y_{g,i}=W_{g,i}Y_{g,i}(1)+(1-W_{g,i})Y_{g,i}(0)$, where $W_{g,i}$ is the treatment assigned to unit $i$ in group $g$,  and $Y_{g,i}(w)$, $w\in\cb{0,1}$, is the potential outcome that we would have observed from unit $i$  in group $g$ if it was assigned to treatment group $w$. 

Based on the collected dataset, the experimenter proceeds to estimate the treatment effect in each group with an estimator $\htau(D)$ that maps from the collected data $D$ to $G$ real-valued estimates of treatment effect $\htau_g$.
They then make a decision $\delta(D)=(\delta_1,\dots,\delta_G)^\top\in \cb{0,1}^G$ regarding the administration of treatment within each group.
Based on the quality of the decisions, 
the experimenter attains an aggregated total utility $U(\alpha,\tau,\delta(D))$ defined as
\begin{equation}
U(\alpha,\tau,\delta(D)) = \sum_{g=1}^G \alpha_g\cdot\tau_g\delta_g.
\label{eq:utility}
\end{equation}
The corresponding regret of making such a decision is
\begin{equation}
R(\bfn,\delta(D))= U(\alpha,\tau,\delta^*) - U(\alpha,\tau,\delta(D)).
\end{equation}
where
\begin{equation}
\delta^* \in \argmax_{\delta\in \cb{0,1}^G}U(\alpha,\tau,\delta).
\end{equation}

\begin{figure}[t!]
\centering
\begin{tikzpicture}[node distance=2cm]
\node (ss) [procedure] {Sample Selection\\ $\bfn\in\mathcal{N}$};
\node (dgp) [procedure, below of=ss] {Data Generating Process\\ $\tau\in \RR^G$};
\node (estimator) [procedure, above of=ss] {Estimator\\ $\htau(D)$};
\node (nature) [procedure_ellipse, left of=ss, xshift=-2cm] {Adversary};
\node (data) [procedure_larger, right of=ss, xshift=1.5cm] {Data\\ $D\sim \mathcal{D}(\bfn,\tau)$};
\node (regret) [procedure, right of=data, xshift=1.5cm] {Regret $R(\bfn,\delta(D))$};
\node (estimation) [procedure_larger, above of=data] {Estimated ATE\\ $\p{\htau_1,\dots,\htau_G}^\top$};
\node (policy) [procedure_larger, above of=regret] {Treatment Policy\\ $\delta(D)$};
\node (eregret) [procedure_larger_grey, below of=regret] {Expected Regret $\EE{R(\bfn,\delta(D))}$};

\draw [arrow, dashed] (nature) -- (ss)node[midway, above, yshift=6.5pt] {know};
\draw [arrow, dashed] (nature) -- (estimator);
\draw [arrow] (nature) -- (dgp)node[midway, below, yshift=-3pt] {choose};
\draw [arrow] (dgp) -- (data);
\draw [arrow] (estimator) -- (estimation);
\draw [arrow] (ss) -- (data);
\draw [arrow] (data) -- (estimation);
\draw [arrow] (estimation) -- (policy);
\draw [arrow] (policy) -- (regret);
\draw [arrow] (dgp) -- (regret);
\draw [arrow, dashed] (regret) -- (eregret);
\end{tikzpicture}
\caption{A flow chart illustrating the choice of a sample selection within a minimax framework where the data generating process is chosen by an adversary.}
\label{fig:flow_minimax}
\end{figure}
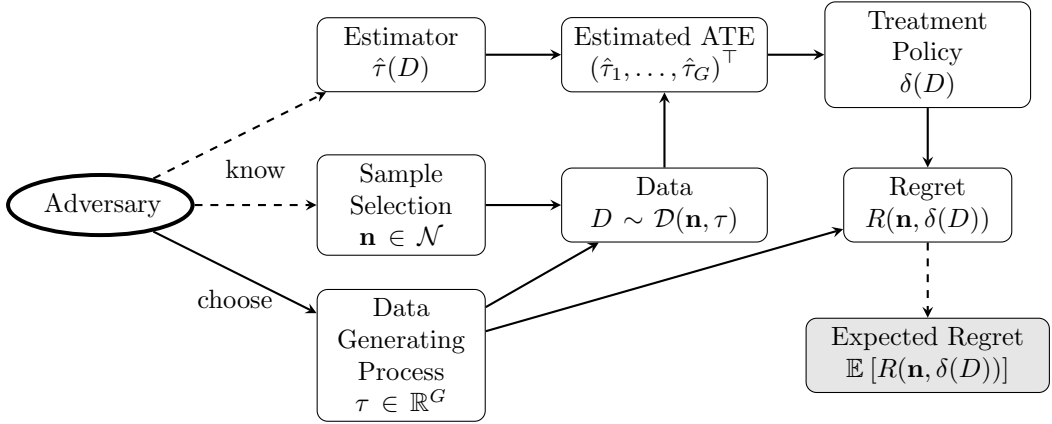

Ideally, one would like to identify the optimal sample selection $\bfn\in\mathcal{N}$ by minimizing the expected total regret as given by $\EE[D]{R(\bfn,\delta(D))}$, with expectation taken over the data generation. However, the expected total regret is contingent upon the unknown quantity $\tau$, rendering it challenging to quantify during the design phase before observing any data. 

Here we adopt a minimax-regret design approach where the sample selection $\bfn$ is chosen to minimize the maximum expected regret that could be incurred if $\tau$ is chosen by an adversary with knowledge of both the choice of sample selection $\bfn$ and the decision rule $\delta(D)$. An illustration of this strategy can be found in Figure \ref{fig:flow_minimax}. Under this framework, we say that a sample selection rule $\bfn\in\mathcal{N}$ is minimax-regret if this sample selection, together with the choice of estimator, minimizes $\EE[D]{R(\bfn,\delta(D))}$ when the data generation is adversarial.
For simplicity, we only consider sample allocation rules where each group is given
an even number of samples; this lets us focus on completely randomized experiments with
treatment (and control) perfectly balanced in each stratum.

\begin{definition}
Let \smash{$\mathcal{N}=\{\bfn:\sum_{g=1}^G n_g\le N, n_g\in 2\ZZ_{\ge 0}, \forall g \}$}.
The worst-case regret of any sample selection rule $\bfn \in\mathcal{N}$ is
\begin{equation}
\label{eq:minimax_def}
H(\bfn) = \inf_{\delta(D)}\max_{\tau\in\RR^{G}} \EE[D]{R(\bfn,\delta(D))}.
\end{equation}
Such a sample selection is minimax-regret if $\bfn \in \argmin_{\bfn\in\mathcal{N} }H(\bfn)$.
\label{definition:minimax}
\end{definition}

Below, we give a near-optimal solution to this minimax-regret problem when the data-generating distribution $D$ is from a Gaussian\footnote{When the data-generating distribution is generalized to be from a sub-Gaussian class, an asymptotic limits-of-experiments analogue to Theorem \ref{theorem:minimax} holds in the large sample limit; see \citet{hirano2009asymptotics} for a general discussion of results of this type.}
class. The sub-optimality in our result is only due to rounding to achieve integer sample allocations, and our solution is optimal whenever no rounding is needed in \eqref{eq:minimax_selection}.
We present a brief proof of the theorem, with proof of the technical lemmas deferred to the supplementary material. 
Compared with a conventional proportional selection, the minimax selection rule dilutes the influence of the weights $\alpha_g$, and oversamples the minority groups.

\begin{theorem}
Suppose $Y_{g,i}(0)$ and $Y_{g,i}(1)$ are generated i.i.d. from Gaussian distributions $N(b_g-\tau_g/2,s_{0,g}^2)$ and $N(b_g+\tau_g/2,s_{1,g}^2)$, respectively.\footnote{For simplicity, we assume the experimenter knows the standard errors. However, $s_{0,g}$ and $s_{1,g}$ can also be regarded as upper bounds on unknown standard errors, as the adversary would maximize the noise levels.}
Suppose that $s_{0,g}^2,s_{1,g}^2,\alpha_g$ are bounded away from $0$ and $\infty$ for all $g$. 
Then among 1:1 completely stratified designs with at most $N$ total units,
the sample selection $\bfn^*=\p{n^*_1,n^*_2,\cdots,n^*_G}^\top$ with
\begin{equation}
n^*_g=2\left\lfloor\frac{(s_{0,g}^2+s_{1,g}^2)^{1/3}\alpha_g^{2/3}N}{2\sum_{g'=1}^G(s_{0,g'}^2+s_{1,g'}^2)^{1/3}\alpha_{g'}^{2/3}}\right\rfloor, \qquad\qquad g=1,\dots,G
\label{eq:minimax_selection}
\end{equation}
is nearly minimax-regret in the sense that
\begin{equation}
\frac{H\p{\bfn^*}-\min_{\bfn\in\mathcal{N} }H\p{\bfn}}{\min_{\bfn\in\mathcal{N} }H\p{\bfn}} = \oo\p{N^{-1}}.
\label{eq:minimax_gap}
\end{equation}
Furthermore, the decision rule induced by a threshold of the difference-in-means estimator
\begin{equation}
\begin{split}
&\delta^{DM}(D) = \p{I(\htau^\text{DM}_1> 0),\cdots,I(\htau^\text{DM}_G> 0)}^\top\\
&\htau^\text{DM}_g = \frac{2}{n_g}\sum_{\cb{i:W_{g,i}=1}} Y_{g,i} - \frac{2}{n_g}\sum_{\cb{i:W_{g,i}=0}} Y_{g,i},\qquad\qquad g=1,\dots,G
\end{split}
\label{eq:dim}
\end{equation}
is the minimax-regret decision rule given data from the experiment.
\label{theorem:minimax}
\end{theorem}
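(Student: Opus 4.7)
The plan is to exploit the separable structure of the utility to reduce the multi-group minimax problem to $G$ independent single-group sign-testing problems, then solve the resulting sample allocation via Lagrangian optimization.

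First, I would argue that the regret decomposes across groups. Because $U(\alpha,\tau,\delta) = \sum_g \alpha_g \tau_g \delta_g$ and the optimal decision is $\delta_g^* = I(\tau_g \geq 0)$, the regret incurred in group $g$ equals $\alpha_g |\tau_g|\cdot I(\delta_g \neq \delta_g^*)$. Moreover, since the data-generating process $\mathcal{D}(\bfn,\tau)$ draws $D_g$ independently across groups, with $D_g$ depending on $\tau$ only through $\tau_g$, an adversary facing the outer supremum over $\tau\in\mathbb{R}^G$ chooses each $\tau_g$ separately. A standard conditioning argument then shows that without loss of generality each $\delta_g$ depends only on $D_g$: any dependence on $D_{g'}$, $g'\neq g$, integrates out into extraneous randomization that cannot decrease the group-$g$ worst-case regret.

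Next, for a fixed group $g$ with sample size $n_g$, the difference-in-means statistic $\htau_g^{\text{DM}}$ is sufficient for $\tau_g$ and distributed as $N(\tau_g, 2(s_{0,g}^2+s_{1,g}^2)/n_g)$ under a 1:1 completely randomized design. I would invoke the classical minimax decision-theory result that for testing the sign of a Gaussian mean with loss $\alpha_g|\tau_g|$ for a wrong decision, the minimax-regret rule thresholds the sufficient statistic at $0$; this can be justified via a least-favorable two-point prior supported on $\pm \tau_g^\star$, where $\tau_g^\star$ is chosen to equalize the Bayes risk across the two hypotheses. A direct computation then shows the group-$g$ worst-case expected regret equals
\begin{equation}
\sup_{\tau_g}\alpha_g|\tau_g|\,\Phi\!\left(-|\tau_g|\sqrt{n_g/(2(s_{0,g}^2+s_{1,g}^2))}\right) = C\,\alpha_g\sqrt{(s_{0,g}^2+s_{1,g}^2)/n_g},
\end{equation}
where $C = \max_{x\geq 0} x\Phi(-x)$ is a universal constant, attained at some $x^\star>0$.

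Summing across groups, minimizing worst-case regret over $\bfn\in\mathcal{N}$ reduces to minimizing $\sum_g \alpha_g \sqrt{(s_{0,g}^2+s_{1,g}^2)}\,n_g^{-1/2}$ subject to $\sum_g n_g \leq N$. A Lagrange multiplier computation (setting $\partial/\partial n_g = 0$) yields $n_g \propto \bigl(\alpha_g\sqrt{s_{0,g}^2+s_{1,g}^2}\bigr)^{2/3} = \alpha_g^{2/3}(s_{0,g}^2+s_{1,g}^2)^{1/3}$, matching \eqref{eq:minimax_selection}. Finally, I would bound the sub-optimality from rounding to even integers: the continuous optimum has each $n_g$ scaling linearly in $N$, and a perturbation of size $O(1)$ in each coordinate changes $\sum_g a_g n_g^{-1/2}$ by a factor $1+O(1/N)$, giving the $\oo(N^{-1})$ relative gap in \eqref{eq:minimax_gap}.

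The main obstacle is the formal justification of the single-group minimax step, in particular ruling out decision rules that might exploit asymmetry or nonstandard aggregation across groups. I would handle this by carefully separating the adversary's choice of $\tau_g$ from the statistician's decision rule via a two-point least-favorable prior, showing both that the DIM threshold achieves the Bayes risk under this prior and that the worst-case risk of the DIM threshold matches this Bayes risk. Once single-group minimaxity is established, the global argument follows from the separability established in the first step.
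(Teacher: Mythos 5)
Your proposal is correct and follows essentially the same route as the paper: decompose the regret across independent groups, reduce each group to a Gaussian sign-testing problem whose minimax regret is $C_0\alpha_g\sqrt{2(s_{0,g}^2+s_{1,g}^2)/n_g}$, minimize the sum over the continuous relaxation of $\mathcal{N}$ by Lagrange multipliers to obtain $n_g\propto\alpha_g^{2/3}(s_{0,g}^2+s_{1,g}^2)^{1/3}$, and absorb the rounding error into the $\oo(N^{-1})$ relative gap since the optimum is $\Theta(N^{-1/2})$ and the perturbation is $\oo(N^{-3/2})$. The only divergence is in justifying the single-group step: you use a symmetric least-favorable two-point prior at $\pm\tau_g^\star$ (chosen to maximize, rather than merely equalize, the Bayes risk), whereas the paper invokes essential completeness of monotone threshold rules and then shows by a contradiction argument that the threshold $T=0$ is optimal; both are standard and yield the same constant $C_0=\sup_{t\ge 0}\,t\,\Phi^c(t)\approx 0.17$.
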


\begin{proof}[Proof of Theorem \ref{theorem:minimax}]
We start by verifying the last part of the statement, i.e., that given the available
data we should choose a decision based on the decision rule induced by a threshold of the difference-in-means estimator, and the induced regret is linear in standard errors.
This result is common in statistical treatment choice with Gaussian errors; see \citet{stoye2012minimax} and \citet{tetenov2012statistical}
for a discussion.

\begin{lemma}
Under the assumptions in Theorem \ref{theorem:minimax} and for any $\bfn \in \mathcal{N}$ with $n_g>0$ for all $g$, there exists a universal constant $C_0\approx 0.17$ such that
\begin{equation}
\label{eq:regbound}
H(\bfn)=C_0\sum_{g=1}^G \alpha_g\sqrt{\frac{2(s_{0,g}^2+s_{1,g}^2)}{n_g}}.
\end{equation}
Furthermore, the thresholded difference-in-means
estimator attains this bound,
\begin{equation}
\begin{split}
&\EE[D]{R(\bfn,\delta^\text{DM}(D))} \leq C_0\sum_{g=1}^G \alpha_g\sqrt{\frac{2(s_{0,g}^2+s_{1,g}^2)}{n_g}}
\end{split}
\end{equation}
for all $\tau \in \RR^G$.
\label{lemma:dm}
\end{lemma}

Now, to lower-bound minimax regret for sample allocation rules $\bfn \in \mathcal{N}$,
we start by considering the minimum of the regret bound \eqref{eq:regbound} over the
the following convex relaxation of $\mathcal{N}$
$$\widetilde{\mathcal{N}}=\cb{\bfn:\sum_{g=1}^G n_g\le N, n_g\in \RR_{\ge 0}, \forall g }. $$
Doing so yields the following:

\begin{lemma}
Under the assumptions in Theorem \ref{theorem:minimax},
\begin{equation}
\begin{split}
&\Tilde{n}_g = \argmin_{\bfn\in\widetilde{\mathcal{N}}}
C_0\sum_{g=1}^G \alpha_g\sqrt{\frac{2(s_{0,g}^2+s_{1,g}^2)}{n_g}} \\
&\Tilde{n}_g=\frac{(s_{0,g}^2+s_{1,g}^2)^{1/3}\alpha_g^{\frac{2}{3}}N}{\sum_{g'=1}^G(s_{0,g'}^2+s_{1,g'}^2)^{1/3}\alpha_{g'}^{2/3}},\qquad\qquad g=1,\dots,G.
\end{split}
\end{equation}
\label{lemma:selection}
\end{lemma}

Our proposed sample allocation rule $n^*_g$ is obtained by rounding $\Tilde{n}_g$.
\begin{lemma}
Under the assumptions in Theorem \ref{theorem:minimax},
\begin{equation}
\begin{split}
\inf_{\delta(D)}\max_{\tau\in\RR^{G}}\EE[D]{R(\bfn^*,\delta(D))}-\min_{\bfn\in\mathcal{N} }\inf_{\delta(D)}\max_{\tau\in\RR^{G}} \EE[D]{R(\bfn,\delta(D))}
= \oo\p{N^{-3/2}}.
\label{eq:proof_gap_rate}
\end{split}
\end{equation}
\label{lemma:gap_rate}
\end{lemma}

From Lemmas \ref{lemma:dm} and \ref{lemma:selection}, $\min_{\bfn\in\mathcal{N} }H\p{\bfn}=\Theta\p{N^{-1/2}}$. 
Combining this with \eqref{eq:proof_gap_rate} gives rise to the claimed result.
\end{proof}


Intuitively, the reason that the optimal sampling scheme in this setting samples the smaller proportion group more, is that,  in the absence of additional structure,  adding a small number of additional samples to  a small number of samples, is more helpful in reducing uncertainty compared to adding those same samples to a larger pool of data for the group with a higher proportion. 

\section{Conventional Sample Selections as Minimax-Regret Solutions}
\label{sec:other_selections}

Above, we found that the minimax-regret sample selection suggests oversampling minority groups and employing a sample allocation proportional to
\smash{$\alpha_g^{2/3}$}, which is not a common choice in the literature---at least currently.
In this section, we discuss two widely adopted sample selection methods, proportional selection and equal selection. We demonstrate that these common selections can also be regarded as regret-minimax selections when the experimenter makes a single  treatment decision for all groups,  or, respectively, uses an alternate  utility function from the one that we used in Section \ref{sec:minimax_selection}.

\subsection{Proportional Selection}
\label{subsec:prop_selection}

Proportional sampling (or sometimes referred to as ``stratified sampling''), is a longstanding and widely embraced approach in the literature of survey sampling \citep{singh2013fundamentals,sharma2017pros}, and is often the default choice when designing stratified experiments.
It involves sampling in each group proportionally to the group’s size when compared to the population. 
While the minimax solution to our defined problem in Section \ref{sec:minimax_selection} is different from the conventional proportional selection, we demonstrate below that proportional sampling is in fact minimax-regret in a different model where the researcher is constrained to make a single decision for the entire population based on the sign of a difference-in-means estimator calculated over the complete sample.

\begin{definition}
Consider the decision class $\{\Tilde{\delta}(D)\}$ that maps from the collected data $D$ to a single decision in $\cb{0,1}$,
i.e., only a joint decision can be made for all groups. Define $\Tilde{\delta}^{DM}(D)$ to be such a decision that is made according to the sign of a pooled difference-in-means estimator, i.e., 
\begin{equation}
\begin{split}
&\Tilde{\delta}^{DM}(D) = I(\htau^\text{DM,p}> 0),\\
&\htau^\text{DM,p} = \frac{2}{N}\sum_{g=1}^G\sum_{\cb{i:W_{g,i}=1}} Y_{g,i} - \frac{2}{N}\sum_{g=1}^G\sum_{\cb{i:W_{g,i}=0}} Y_{g,i}.
\end{split}
\end{equation}
The worst-case regret of any sample selection rule $\bfn \in\mathcal{N}$ subject to this class of decision rules is
\begin{equation}
\label{eq:joint_minimax}
H^\dag(\bfn) = \max_{\tau\in\RR^{G}} \EE[D]{R(\bfn,\Tilde{\delta}^{DM}(D))}.
\end{equation}
Such a sample selection is joint-minimax-regret if $\bfn \in \argmin_{\bfn\in\mathcal{N} }H^\dag(\bfn)$.
\label{definition:joint_minimax}
\end{definition}

\begin{theorem}
Under the conditions of Theorem \ref{theorem:minimax}, suppose in addition that $\alpha_gN$ is an even integer for all $g$.
Then the sample selection $\bfn^{\dag}=\p{n^\dag_1,n^{\dag}_2,\cdots,n^{\dag}_G}^\top$ with
\begin{equation}
{n}^{\dag}_g=\alpha_gN, \qquad\qquad g=1,\dots,G
\label{eq:proportional_selection}
\end{equation}
is joint-minimax-regret in the sense that
\begin{equation}
{n}^{\dag} \in \argmin_{\bfn\in\mathcal{N} }H^\dag(\bfn).
\end{equation}
\label{theorem:minimax_proportional}
\end{theorem}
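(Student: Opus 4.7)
The plan is to reduce the minimax-regret problem to a one-parameter optimization. First, I would compute the sampling distribution of the pooled difference-in-means estimator. Since $\htau^\text{DM,p}=\sum_{g=1}^G (n_g/N)\,\htau^\text{DM}_g$ and the group-level estimators are independent Gaussians under the model,
\begin{equation*}
\htau^\text{DM,p}\sim N\p{\mu_\bfn(\tau),\,\sigma_\bfn^2},\quad \mu_\bfn(\tau)=\sum_{g=1}^G \frac{n_g}{N}\tau_g,\quad \sigma_\bfn^2=\frac{2}{N^2}\sum_{g=1}^G n_g(s_{0,g}^2+s_{1,g}^2).
\end{equation*}
Writing $T_\alpha(\tau)=\sum_g \alpha_g\tau_g$ for the population-weighted effect, the expected regret incurred by $\Tilde{\delta}^{DM}$ works out to
\begin{equation*}
\EE[D]{R(\bfn,\Tilde{\delta}^{DM}(D))}=|T_\alpha(\tau)|\cdot \Phi\!\p{-\operatorname{sign}(T_\alpha(\tau))\,\mu_\bfn(\tau)/\sigma_\bfn},
\end{equation*}
since regret is incurred only when the pooled decision disagrees with the oracle sign of $T_\alpha(\tau)$.

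Second, I would show that $H^\dag(\bfn)=+\infty$ whenever the vector $(n_g/N)_g$ is not a positive multiple of $(\alpha_g)_g$. Non-proportionality yields, by a standard separating-hyperplane argument, some $\tau_0$ with $T_\alpha(\tau_0)>0$ and $\mu_\bfn(\tau_0)<0$; rescaling $\tau=c\tau_0$ as $c\to\infty$ drives $|T_\alpha|\to\infty$ while the Gaussian tail $\Phi(-\operatorname{sign}(T_\alpha)\mu_\bfn/\sigma_\bfn)\to 1$. Consequently, any candidate minimizer of $H^\dag$ must have the form $n_g=\alpha_g M$ for some common $M\in(0,N]$, and the hypothesis that $\alpha_g N$ is an even integer guarantees $\bfn^\dag\in\mathcal{N}$.

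Third, substituting $n_g=\alpha_g M$ I would obtain $\mu_\bfn(\tau)/\sigma_\bfn=T_\alpha(\tau)\sqrt{M/(2V)}$ with $V=\sum_g\alpha_g(s_{0,g}^2+s_{1,g}^2)$. The dependence on $\tau$ then collapses onto the scalar $T_\alpha$, giving
\begin{equation*}
H^\dag(\bfn)=\max_{t\in\RR}|t|\,\Phi\!\p{-|t|\sqrt{M/(2V)}}=C_0\sqrt{2V/M},
\end{equation*}
where $C_0=\max_{u\ge 0}u\,\Phi(-u)$ is the same universal constant that appears in Lemma \ref{lemma:dm}. This is strictly decreasing in $M$, so the minimum over $\bfn\in\mathcal{N}$ is attained at $M=N$, yielding $n_g^\dag=\alpha_g N$. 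The main obstacle is the dichotomy in the second step: constructing the adversarial direction $\tau_0$ rigorously for any non-proportional $\bfn$ and verifying that the Gaussian tail stays bounded away from zero under scaling. After that, the remaining one-parameter minimization mirrors the scalar calculation underlying Lemma \ref{lemma:dm}.
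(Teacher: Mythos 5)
Your proposal is correct, and it reaches the conclusion by a genuinely cleaner route than the paper for the key step. Both arguments share the same skeleton: compute the Gaussian law of $\htau^\text{DM,p}$, show that allocations whose weights $n_g/N$ are not proportional to $\alpha_g$ suffer infinite worst-case regret, and then reduce the proportional case to the scalar maximization $\max_{u\ge 0} u\,\Phi^c(u)=C_0$. Where you differ is in how the infinite-regret dichotomy is established. The paper fixes the standardized mean $t^\dag$ of the pooled estimator, solves a Lagrangian for the worst-case $\tau$ subject to that constraint, and obtains an explicit regret expression as a combination of $\Phi^c(t^\dag)^2/\phi(t^\dag)$ and $t^\dag\Phi^c(t^\dag)$; it then argues that the coefficient of the divergent term must vanish, and closes with a contradiction argument involving sums of reciprocals. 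You instead observe directly that if $(n_g/N)_g$ is not a positive multiple of $(\alpha_g)_g$, the two linear functionals $\tau\mapsto\sum_g\alpha_g\tau_g$ and $\tau\mapsto\sum_g (n_g/N)\tau_g$ are linearly independent, so one can choose $\tau_0$ on which they disagree in sign and rescale to drive the regret to infinity; this is more elementary, avoids the Lagrangian algebra entirely, and makes the "proportional ray or bust" structure transparent. What the paper's computation buys in exchange is an explicit closed form for the worst-case regret at an arbitrary (non-proportional) allocation, which your argument does not provide but which the theorem does not need. Two small points to tighten: you should explicitly exclude the degenerate allocation $\bfn=0$ (where the pooled estimator is undefined) before invoking the linear-independence argument, and when concluding you should note that among feasible proportional allocations $n_g=\alpha_g M$ with $M\le N$, the value $C_0\sqrt{2V/M}$ is minimized at $M=N$, which lies in $\mathcal{N}$ precisely because $\alpha_g N$ is assumed to be an even integer---you gesture at both, and neither is a real gap.
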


The intuition behind this sample selection is that, when the decision-maker has to make a joint decision, any mismatch between the weight $\alpha_g$ in the population and the weight $n_g/N$ used in the estimator can lead to arbitrarily bad performance. Thus, under the decision process outlined in Definition \ref{definition:joint_minimax}, any sample selection other than proportional selection may result in infinite regret.

\subsection{An Egalitarian Solution}
\label{subsec:egalitarian_selection}

Although much work in learning and data-driven decision making is implicitly underpinned by a utilitarian ethical model,
some applications may call for different ways of comparing benefits across people or groups. In particular, some researchers
have argued that maximization of utilitarian social welfare may result in decisions that prioritize the overall population's
well-being at the expense of utility for minority groups \citep{rawls1971atheory}. 
In consideration of this, we also investigate minimax-regret sample selection guided by the egalitarian rule (also known as the max-min rule or the Rawlsian rule), which seeks to maximize the welfare of the worst-off individual in society \citep{rawls1971atheory,kolm1997justice,moulin2004fair,sen2018collective}. Under this setup, the regret objective is no longer
framed in terms of the average utility across the entire population, but rather at the group level---and we then focus on minimizing
regret uniformly across all groups.

\begin{definition}
Define $U_g(\tau_g,\delta_g) = \tau_g\delta_g$, and $R_g(\bfn,\delta(D))=U_g(\tau_g,\delta_g^*)-U_g(\tau_g,\delta_g(D))$.
The worst-case regret of any sample selection rule $\bfn \in\mathcal{N}$ for the egalitarian social welfare\footnote{Here, we focus on the single group with the smallest forecasted regret while averaging over the randomness in the data generation. Alternatively, one may consider defining $g^* = \argmax_g R_g(\bfn,\htau(D))$, where the worst-off group might vary with different realizations. While this is a valid formulation to consider, we do not pursue this path here.}  is
\begin{equation}
\label{eq:ega_minimax}
H^\ddag(\bfn) = \inf_{\delta(D)}\max_{\tau\in\RR^{G}}\max_g \EE[D]{R_g(\bfn,\delta(D))}.
\end{equation}
Such a sample selection is egalitarian-minimax-regret if $\bfn \in \argmin_{\bfn\in\mathcal{N} }H^\ddag(\bfn)$.
\end{definition}

Below, we show that, when one considers egalitarian social welfare, the minimax-regret problem indeed leads to an equal sample allocation across all groups if the signal-to-noise ratio in each group is the same. 

\begin{theorem}
Under the conditions of Theorem \ref{theorem:minimax}, 
the sample selection $\bfn^{\ddag}=\p{n^\ddag_1,n^{\ddag}_2,\cdots,n^{\ddag}_G}^\top$ with
\begin{equation}
{n}^{\ddag}_g=2\left\lfloor\frac{s_{0,g}^2+s_{1,g}^2}{2\sum_{g=1}^G\p{s_{0,g}^2+s_{1,g}^2}}N\right\rfloor, \qquad\qquad g=1,\dots,G
\label{eq:egalitarian_selection}
\end{equation}
is nearly egalitarian-minimax-regret 
in the sense that
\begin{equation}
\frac{H^\ddag\p{\bfn^\ddag}-\min_{\bfn\in\mathcal{N} }H^\ddag\p{\bfn}}{\min_{\bfn\in\mathcal{N} }H^\ddag\p{\bfn}} = \oo\p{N^{-1}}.
\label{eq:minimax_egalitarian_gap}
\end{equation}
\label{theorem:minimax_egalitarian}
\end{theorem}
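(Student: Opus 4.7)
\emph{Proof plan.} The plan is to mirror the structure of the proof of Theorem \ref{theorem:minimax}, with the weighted-sum regret replaced by a worst-group regret, and to proceed in three steps.

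First, I would show that the Egalitarian worst-case regret decouples across groups. The key observation is that $R_g(\bfn,\delta(D)) = \tau_g(\delta^*_g - \delta_g(D))$ depends on $\delta$ only through $\delta_g$ and on $\tau$ only through $\tau_g$, and that under the Gaussian model the datasets $D_g$ are mutually independent across groups with the marginal law of $D_g$ depending on $\tau$ only through $\tau_g$. Combining these two facts, I would establish
\begin{equation*}
H^\ddag(\bfn) = \max_g \inf_{\delta_g(D_g)} \max_{\tau_g \in \RR} \EE[D_g]{R_g(\bfn,\delta_g(D_g))}.
\end{equation*}
The ``$\geq$'' direction follows by restricting the adversary to perturbations of a single coordinate $\tau_g$ about a fixed baseline (say $\tau_{-g} = 0$) and using a Rao-Blackwell averaging step to turn any full-data rule into an equivalent per-group rule with the same expected $g$-regret. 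The ``$\leq$'' direction follows by exhibiting the per-group thresholded DM rule $\delta^{\text{DM}}(D)$ from \eqref{eq:dim} and noting that $\max_\tau \max_g \EE[D]{R_g(\bfn,\delta^{\text{DM}}(D))}$ attains all per-group minimax values simultaneously.

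Second, I would invoke Lemma \ref{lemma:dm} in the single-group case ($G=1$, $\alpha_1 = 1$) to evaluate each per-group minimax, obtaining
\begin{equation*}
H^\ddag(\bfn) = \max_g C_0 \sqrt{\frac{2(s_{0,g}^2 + s_{1,g}^2)}{n_g}}.
\end{equation*}
Over the convex relaxation $\widetilde{\mathcal{N}}$ this quantity is minimized by equalizing its terms: whenever the maximum is attained on a strict subset of the groups, shifting an infinitesimal budget from any slack coordinate to a binding one strictly decreases the max. Enforcing equality together with the budget constraint $\sum_g n_g = N$ yields the continuous optimum $\tilde{n}_g = (s_{0,g}^2 + s_{1,g}^2)N / \sum_{g'}(s_{0,g'}^2 + s_{1,g'}^2)$, which matches \eqref{eq:egalitarian_selection} before rounding. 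The relative-error bound \eqref{eq:minimax_egalitarian_gap} then follows from the same Taylor-expansion argument as in \eqref{eq:proof_gap_rate}: since $\tilde{n}_g - 2 < n^{\ddag}_g \leq \tilde{n}_g$ and $\tilde{n}_g = \Theta(N)$, a first-order expansion gives $(n^{\ddag}_g)^{-1/2} - \tilde{n}_g^{-1/2} = \oo\p{N^{-3/2}}$, so $H^\ddag(\bfn^\ddag) - \min_{\widetilde{\mathcal{N}}} H^\ddag = \oo\p{N^{-3/2}}$, which combined with $\min_{\mathcal{N}} H^\ddag = \Theta\p{N^{-1/2}}$ gives \eqref{eq:minimax_egalitarian_gap}.

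The main obstacle will be the first step: rigorously establishing the per-group decoupling. The Rao-Blackwell argument is conceptually clean because $D_{-g}$ is ancillary for $\tau_g$, but some care is needed to handle (possibly randomized) decision rules and to verify that the averaged rule is measurable and preserves the expected $g$-regret at the chosen baseline for $\tau_{-g}$. Once the decoupling is in hand, the remaining arguments are direct analogues of those in the proof of Theorem \ref{theorem:minimax}, with $\max_g$ replacing the weighted sum $\sum_g \alpha_g$.
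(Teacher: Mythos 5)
Your proposal is correct and follows essentially the same route as the paper: decouple the worst-group regret into per-group minimax problems using independence across groups, evaluate each via the single-group version of Lemma \ref{lemma:dm} to get $H^\ddag(\bfn)=C_0\max_g\sqrt{2(s_{0,g}^2+s_{1,g}^2)/n_g}$, equalize the per-group terms over the convex relaxation, and control the rounding error against the $\Theta(N^{-1/2})$ scale of the optimum. Your extra care in justifying the interchange $\inf_\delta\max_g=\max_g\inf_\delta$ (via the DM rule attaining all per-group minimax values simultaneously) is a point the paper asserts more tersely, but it is the same argument.
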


One may notice that the resulting sample selection shares a common ground with the Neyman allocation (also known as the optimal allocation), which advocates sampling proportionally to the standard deviation \citep{neyman1934two}. However, Neyman's goal was to minimize the variance of a sample mean estimator, whereas our focus is on the quality of the decision and we target directly the regret with egalitarian social welfare. 

\section{Case Study: A COVID-19 Vaccine Trial}
\label{sec:case_study}

To further explore the implications of different group sample size allocation practices, we conduct a semi-synthetic analysis using historical data from COVID-19 vaccine development. 
Different demographic subpopulations may experience different benefits and side effects from vaccines, and so it is of interest to consider experimental design to inform vaccination recommendations for subpopulations. In real-world clinical trials, there are many important considerations beyond the ones discussed in this paper; our goal here is simply to help illustrate how different decision-theoretic criteria may lead to notably different group allocations. 

We use data from a large Phase 3 randomized, placebo-controlled COVID-19 vaccine trial conducted at 99 centers across the United States \citep{baden2021efficacy}. 
To account for anticipated heterogeneity in the treatment effect 
across different demographic groups, the randomization is stratified into three groups: 1. $\ge18$ and $<65$ years and not at risk (for severe Covid-19); 2. $\ge18$ and $<65$ years and at risk; 3. $\ge 65$ years. 
In an effort to 
demonstrate how the proposed sample selection schemes can be used in practice, we explore the sample selections chosen with the three minimax sample selections given in Theorems \ref{theorem:minimax}, \ref{theorem:minimax_proportional} and \ref{theorem:minimax_egalitarian}, and compare their performances under both the worst-case incidence rates and the incidence rates reported in \citet{baden2021efficacy}.

Motivated by the two-fold primary objective of the trial, we study a composite outcome $\Tilde{Y}_{g,i}=Y_{g,i,1}+\beta Y_{g,i,2}$, where $Y_{g,i,1}$ represents the incidence of severe COVID-19 of unit $i$ in group $g$, $Y_{g,i,2}$ represents the incidence of severe solicited adverse reaction of unit $i$ in group $g$ during the injections, and $\beta$ governs the tradeoff between the efficacy and the safety of the vaccine.\footnote{We define severe solicited adverse reactions as solicited adverse reactions with a toxicity grade for Erythema greater than or equal to 3 \citep{baden2021efficacy}.}
Since the incidence rates are only reported stratified by whether age is above or below 65 years, we will consider the case where there is only treatment effect heterogeneity between people with age $\ge 18$ to $<65$ yr (group 1), and with age $\ge 65$ yr (group 2). 
According to the 2020 U.S. Census Bureau's data on age \citep{census2020age}, the estimated proportion of people aged 65 and over in the United States is nearly 17\%, which motivates our choice of weights  $\alpha_1=0.83$ and $\alpha_2=0.17$. 
Throughout, we consider two different values of $\beta$, with case 1 being $\beta=0.005$ (in which case it is beneficial to assign treatment to both groups under the reported incidence rates) and case 2 being $\beta=0.025$ (in which case it is only beneficial to assign treatment to people $>=65$ years old under the reported incidence rates).

\subsection{Variance Approximation}
\label{subsec:noise_approx}

To determine the minimax-regret sample selections, one needs to specify additionally the noise levels $s_{w,g}$, $g=1,2$, $w=0,1$. 
We obtain a conservative approximation of those noise levels based on empirical observations of baseline risks for severe COVID-19 reported by the Centers for Disease Control and Prevention (CDC) \citep{cdc2020covid} and risk levels of severe adverse reactions reported in an earlier phase trial \citep{jackson2020mrna}. 

\begin{table}[t]
\centering
\begin{tabular}{|c|c|c|}
  \hline
   &&\\[-1em]
 Group & 1 & 2  \\ 
  \hline
   &&\\[-1em]
 Case 1  & 11.79 & 22.08 \\ 
   \hline
   &&\\[-1em]
 Case 2  & 11.82 & 22.10 \\ 
   \hline
\end{tabular}
\caption{Approximated noise levels $\sqrt{s_{0,g}^2+s_{1,g}^2}$ (\%) used to determine the allocations.}
\label{tab:covid_noise_approx}
\end{table}

According to CDC's data recorded as of September 2020 (during  the  period when the protocol was being prepared), the incidence rate for newly admitted patients with confirmed COVID-19 was around $0.7\%$ for people $>=18$ to $<65$ yr, and $2.5\%$ for people $>=65$ yr \citep{cdc2020covid}. Furthermore, according to a phase 1 trial of an mRNA vaccine, the incidence rate of severe solicited adverse reactions was around $6.7\%$ in a dose group of 100mcg \citep{jackson2020mrna}. Given this, we consider a conservative approximation of the noise levels, 
where we take the incidence rates of severe COVID-19 in the treated group to be the same as the ones in the control group and the incidence rates of severe solicited adverse reactions in the control group to be the same as the ones in the treated group.
Those approximated noise levels are summarized in Table \ref{tab:covid_noise_approx}.

\subsection{Budget Constraint}
\label{subsec:sample_size}

While there is no explicit requirement that sample sizes must be based on power calculations, it is generally considered good practice to conduct power calculations to determine the appropriate sample size for clinical trials \citep{food1988guideline}. 
Thus, we determine the total sample size constraint by following \citet{baden2021efficacy} that sets $N$ to be the sample size required to achieve a power of 0.9 under a specified one-sided hypothesis test with $\alpha=0.05$. 

In \citet{baden2021efficacy}, the sample size is chosen to be the minimum number of samples such that there is 90\% power to detect a 60\% reduction in hazard rate of COVID-19. 
Given that hazard ratios are commonly interpreted as the incidence rate ratios in practice \citep{hernan2010hazards}, we translate this into a sample size calculation where $N$ is chosen to be the minimum number of samples required to achieve 90\% power in detecting that $\tau_{\text{COVID}} = -0.6\,\EE{Y_{g,i,0}(0)}$. Utilizing the approximated incidence rates of severe COVID-19 in the control group as detailed in Section \ref{subsec:noise_approx}, we have the following hypothesis test:
\begin{equation}
H_0: \tau_{\text{COVID}}=0,\qquad\qquad\qquad\qquad H_1: \tau_{\text{COVID}}=-0.60\%.
\end{equation}
As in \citet{baden2021efficacy}, the sample size is calculated by assuming a homogeneous treatment effect across the entire population, which is a standard assumption in such calculations.

To calculate the total sample size, we adopt the approach outlined in \citet{charan2013calculate} and assume that $\Tilde{Y}_{g,i,1}(1)$ and $\Tilde{Y}_{g,i,1}(0)$ are generated i.i.d. from $N(b_{\text{COVID}}-\tau_{\text{COVID}}/2, s_{0,\text{COVID}}^2)$ and $N(b_{\text{COVID}}+\tau_{\text{COVID}}/2, s_{1,\text{COVID}}^2)$, respectively, with $s_{w,\text{COVID}}^2=\sum_{g=1,2}\alpha_g s_{w,g,\text{COVID}}^2$, $s_{w,g,\text{COVID}}$ denoting the standard deviations of $Y_{g,i,1}(w)$, and $b_{\text{COVID}}$ being a nuisance baseline parameter. 
We find the required total sample size $N$ that, with a difference-in-means estimator,
\begin{equation}
    \PP{\frac{\htau_{\text{COVID}}^{\text{DM}}}{\sqrt{\Var{\htau_{\text{COVID}}^{\text{DM}}}}}> Z_{95\%}\cond H_1} = 90\%.
\end{equation}
Note that under $H_1$,
\begin{equation}
\htau^{\text{DM}}_{\text{COVID}}\sim N (0.60\%, 2(s_{0,\text{COVID}}^2+s_{1,\text{COVID}}^2)/N)
\end{equation}
and $Z_{p}$ is the $p$th quantile of a standard Gaussian random variable. This gives rise to a total sample size of
\begin{equation}
N=\frac{2(s_{0,\text{COVID}}^2+s_{1,\text{COVID}}^2)\p{Z_{90\%}+Z_{95\%}}^2}{{\EE{\htau^{\text{DM}}_{\text{COVID}}\cond H_1}}^2}\approx 9320.
\label{eq:sample_size}
\end{equation}
Thus, we set the constraint to be $n_1+n_2\le 9320$, $n_1, n_2>0$. 

\subsection{Worst-Case Performance}
\label{subsec:worst_perf}

We start by comparing the worst-case performances of the minimax sample selections (Equations \eqref{eq:minimax_selection}, \eqref{eq:proportional_selection} and \eqref{eq:egalitarian_selection}) calculated based on the approximated noise levels in Table \ref{tab:covid_noise_approx}. 
For demonstration purposes, we consider a data-generating distribution as stated in Theorem \ref{theorem:minimax}, with noise levels fixed
to be the ones given in Table \ref{tab:covid_noise_approx} and treatment effects chosen adversarially. To evaluate the worst-case performances, we calculate 
\begin{equation}
    \max_\tau\EE{R(\bfn,\delta^\text{DM}(D))},
\end{equation}
the maximum expected regret using the DM estimator with group-level decisions, 
\begin{equation}
    \max_\tau\EE{R(\bfn,\Tilde{\delta}^\text{DM}(D))},
\end{equation} 
the maximum expected regret using the DM estimator with a joint decision, and 
\begin{equation}
    \max_\tau\max_g\EE{R_g(\bfn,\delta^\text{DM}(D))},
\end{equation} 
the maximum expected regret using the DM estimator with group-level decisions under an egalitarian utility function, respectively.

\begin{table}[t]
\centering
\begin{tabular}{|c|c|rrr|c|rrr|}
  \hline
  \multirow{4}{*}{Scheme} & \multicolumn{4}{c|}{Case 1 $(\beta=0.005)$} & \multicolumn{4}{c|}{Case 2 $(\beta=0.025)$} \\ 
  \cline{2-9}
  & \multirow{3}{*}{Allocation} & \multicolumn{3}{c|}{Maximum Expected} & \multirow{3}{*}{Allocation} & \multicolumn{3}{c|}{Maximum Expected}  \\ 
  & & \multicolumn{3}{c|}{Regret ($\times 10^{-4}$)} & & \multicolumn{3}{c|}{Regret ($\times 10^{-4}$)} \\ 
  \cline{3-5}
  \cline{7-9}
  & & $\delta_g,U$ & $\delta,U$ & $\delta_g,U^\ddag$ & & $\delta_g,U$ & $\delta,U$ & $\delta_g,U^\ddag$ \\
  \hline
  $18$ to $65$ yr only & (9320, 0) & $\infty$ & $\infty$ & $\infty$ & (9320, 0) & $\infty$ & $\infty$ & $\infty$  \\ 
  $>=65$ yr only & (0, 9320) & $\infty$ & $\infty$ & $\infty$ & (0, 9320) & $\infty$ & $\infty$ & $\infty$ \\ 
  Minimax & (6100, 3218) & \textbf{4.60} & $\infty$ & 9.36 & (6102, 3216) & \textbf{4.61} & $\infty$ & 9.37 \\ 
  Proportional & (7734, 1584) & 4.94 & \textbf{3.51} & 13.34 & (7734, 1584) & 4.95 & \textbf{3.51} & 13.35 \\ 
  Egalitarian & (2068, 7250) & 6.23 & $\infty$ & \textbf{6.23} & (2074, 7244) & 6.24 & $\infty$ & \textbf{6.24} \\ 
  Neyman & (3244, 6074) & 5.29 & $\infty$ & 6.81 & (3248, 6070) & 5.30 & $\infty$ & 6.82 \\ 
  \hline
\end{tabular}
\caption{Sample allocations and their worst-case expected regret ($\delta_g,U$: separate decision, aggregated utility; $\delta,U$: joint decision, aggregated utility; $\delta_g,U^\ddag$: separate decision, egalitarian utility). Recall $\Tilde{Y}_{g,i}=Y_{g,i,1}+\beta Y_{g,i,2}$; $Y_{g,i,1}$ is the incidence of severe COVID-19 of unit $i$ in group $g$; and  $Y_{g,i,2}$ represents the incidence of severe adverse reactions. }
\label{tab:covid_worst}
\end{table}

Table \ref{tab:covid_worst} presented the minimax sample allocations as well as their worst-case performances under different decision paradigms.
We first highlight that different decision paradigms result in significantly distinct sample selections. The egalitarian and the Neyman approaches allocate a substantial portion of the sample to the older group (i.e., the group with a larger anticipated variance and a smaller 
weight), with the egalitarian selection being more extreme.
In contrast, both the minimax and the proportional selections prioritize the younger group that possesses a large 
weight, with the latter assigning it a higher number of samples. 
Moreover, it is evident that the minimax-regret sample selections are indeed regret minimax, and they consistently outperform the other sample allocations with respect to the specific worst-case scenarios defined by their targeting decision paradigms.
Meanwhile, extreme selections that allocate all samples to one group result in unbounded worst case regret.
Furthermore, when the policy-maker expects to make a single decision for the entire population, any sample selection other than proportional selection may result in arbitrarily bad decisions.
Nevertheless, proportional selection's performance deteriorates when group-level decisions are permitted, particularly when the policy-maker prioritizes the utility in the worst-off group.

\subsection{Performance under Reported Incidence Rates}
\label{subsec:reported_perf}

In addition to the worst-case performances, we also evaluate the performance of the minimax sample selections under the incidence rate reported in \citet{baden2021efficacy}, reproduced in Table \ref{tab:covid_rate}.
We then use Gaussian approximation to obtain the distribution of the difference-in-means treatment effect estimator,
resulting in parameters reported in Table \ref{tab:covid_param}.

\begin{table}[t]
\centering
\begin{tabular}{|c|c|c|}
  \hline
   &&\\[-1em]
 Incidence Rate of Severe COVID-19 & $>=18$ to $<65$ yr (g=1) & $>=65$ (g=2)  \\ 
  \hline
 &&\\[-1em]
 mRNA Vaccine ($w=1$) & 0 & 0  \\ 
 &&\\[-1em]
 Placebo ($w=0$) & 0.19 & 0.28 \\ 
   \hline
 &&\\[-1em]
 Incidence Rate of Severe Solicited ARs & $>=18$ to $<65$ yr (g=1) & $>=65$ (g=2)  \\ 
  \hline
 &&\\[-1em]
 mRNA Vaccine ($w=1$) & 14.55 & 9.50  \\ 
 &&\\[-1em]
 Placebo ($w=0$) & 2.53 & 2.48 \\ 
   \hline
\end{tabular}
\caption{Reported incidence rates (\%) of severe COVID-19 and severe solicited adverse reactions in \citet{baden2021efficacy}.}
\label{tab:covid_rate}
\end{table}

\begin{table}[t]
\centering
\begin{tabular}{|c|cc|cc|}
  \hline
 & \multicolumn{2}{c|}{Case 1} & \multicolumn{2}{c|}{Case 2} \\ 
  \hline
   &&\\[-1em]
 Group & 1 & 2 & 1 & 2 \\ 
  \hline
   &&\\[-1em]
 $\tau_g$ (\%)
 & -0.13 & -0.24 & 0.11 & -0.10 \\ 
 $\sqrt{s_{0,g}^2+s_{1,g}^2}$ (\%) & 4.35 & 5.28 & 4.45 & 5.34 \\ 
   \hline
\end{tabular}
\caption{Ground truth of $\tau_g$ and $\sqrt{s_{0,g}^2+s_{1,g}^2}$ used to evaluate the performance of the allocations.}
\label{tab:covid_param}
\end{table}

\begin{table}[t]
\centering
\begin{tabular}{|c|c|rrr|c|rrr|}
  \hline
  \multirow{4}{*}{Scheme} & 
  \multicolumn{4}{c|}{Case 1 $(\beta=0.005)$} & \multicolumn{4}{c|}{Case 2 $(\beta=0.025)$} \\ 
  \cline{2-9}
  & \multirow{3}{*}{Allocation} & \multicolumn{3}{c|}{Expected Regret} & \multirow{3}{*}{Allocation} & \multicolumn{3}{c|}{Expected Regret}  \\ 
  & & \multicolumn{3}{c|}{($\times 10^{-4}$)} & & \multicolumn{3}{c|}{($\times 10^{-4}$)} \\ 
  \cline{3-5}
  \cline{7-9}
  & & $\delta_g,U$ & $\delta,U$ & $\delta_g,U^\ddag$ & & $\delta_g,U$ & $\delta,U$ & $\delta_g,U^\ddag$ \\
  \hline
  $18$ to $65$ yr only & (9320, 0) & 2.30 & 0.316 & 12.20 & (9320, 0) & 1.29 & \textbf{0.33} & 5.18  \\ 
  $>=65$ yr only & (0, 9320) & 5.38 & 0.012 & 6.47 & (0, 9320) & 4.77 & 6.76 & 5.55 \\ 
  Minimax & (6100, 3218) & \textbf{0.67} & 0.005 & \textbf{0.78} & (6102, 3216) & 1.16 & 1.73 & {2.26}  \\ 
  Proportional & (7734, 1584) & 0.75 & 0.047 & 2.36 & (7734, 1584) & \textbf{1.07} & 0.68 & 3.03  \\ 
  Egalitarian & (2068, 7250) & 1.83 & {0.003} & 2.19 & (2074, 7244) & 2.16 & 6.07 & 2.34  \\ 
  Neyman & (3244, 6074) & 1.26 & \textbf{0.002} & 1.50 & (3248, 6070) & 1.70 & 5.19 & \textbf{1.75} \\ 
  \hline
\end{tabular}
\caption{Sample allocations and their expected regret ($\delta_g,U$: separate decision, aggregated utility; $\delta,U$: joint decision, aggregated utility; $\delta_g,U^\ddag$: separate decision, egalitarian utility) under the incidence rates reported in \citet{baden2021efficacy}.}
\label{tab:covid}
\end{table}

Table \ref{tab:covid} summarizes the sample allocations and their performances. 
Despite the approximated noise levels used for guiding the selections being largely misspecified, the selections we investigated still achieve decent performances. 
In particular, although the minimax selection may appear conservative, it tends to outperform the proportional selection in most cases in this non-adversarial real-data example.
In Case 1 where the minority group exhibits a higher signal-to-noise ratio, the minimax selection that allocates more equally outperforms under separate decisions; when only a joint decision is allowed, 
the egalitarian and the Neyman selections that allocate a large portion of the sample to the easier-to-learn group achieve a smaller regret.
Nevertheless, in Case 2 where the optimal decisions for the groups differ, the egalitarian selection that aims to minimize regret in the worst-off group performs considerably worse, and it appears to be less robust to model misspecification than the Neyman selection in the setting we consider.

\section{Bayes-Optimal Sample Selection}
\label{sec:bayesian_selection}

In Section \ref{sec:minimax_selection}, we identified the minimax-regret sample selection by minimizing the expected regret with $\tau$ chosen adversarially. The minimax rule can also be interpreted as a Bayes-optimal rules with a least favorable prior on $\tau$ \citep{wasserman2013all}. While the minimax rule has attractive worst-case guarantees, a decision-maker with meaningful information about $\tau$ may prefer to conduct Bayes-optimal sample selection using their chosen prior rather than with the least favorable one underlying the minimax rule.
In this section, we briefly consider Bayesian sample selection under an informative prior, and examine alignment of the minimax-regret sample selections with Bayes-optimal ones in a simple example.

Within a Bayesian framework, the parameters are regarded as random variables governed by a prior distribution. Unlike in non-Bayesian approaches where the treatment effect $\tau$ is treated as an unknown fixed value, the experimenter holds a belief captured by a prior distribution $p$ that reflects their understanding of $\tau$ based on various sources such as historical data, literature, or common beliefs.\footnote{In the Bayesian framework, we assume knowledge of all other components of the data-generating distribution. However, it is also possible to extend our framework to incorporate priors on these components.} 
Below, we give a formal definition of the Bayes-optimal sample selection.

\begin{definition}
Given a prior $p$ on the unknown parameter $\tau$, let $\delta^p(D)$ be a decision that is made according to the sign of the posterior mean of $\tau$, i.e., 
\begin{equation}
\begin{split}
&\delta^{p}(D) = I(\htau^\text{PM}\ge 0),\\
&\htau^\text{PM} = \EE{\tau\cond D}.
\end{split}
\end{equation}
The Bayesian expected regret of a sample selection rule $\bfn \in\mathcal{N}$ subject to this prior is
\begin{equation}
\label{eq:bayes_optimal}
H^p(\bfn) = \EE[\tau\sim p]{ \EE[D]{R(\bfn,\delta^p)\cond \tau}}.
\end{equation}
Such a sample selection is Bayes-optimal with respect to prior $p$ if $\bfn \in \argmin_{\bfn\in\mathcal{N} }H^p(\bfn)$.
\label{definition:bayes_optimal}
\end{definition}

An advantage of the Bayesian decision-theoretic framework is its flexibility in modeling relationships between different groups. 
For example, one can account for information sharing across groups by adjusting the prior correlation between $\tau_g$'s. 
A high correlation implies that the treatment effects across groups are likely similar, and thus insights gained from one group can inform the others.

To obtain the Bayesian expected regret, both a prior and a data-generating distribution must be specified. When the prior is conjugate to the data-generating process, $\hat{\tau}^\text{PM}$ can be derived in closed form. Otherwise, computational methods such as Markov Chain Monte Carlo \citep{gelman2013bayesian} are needed to approximate $\hat{\tau}^\text{PM}$. In the supplementary material, we provide derivations of the posterior distribution to facilitate computation when the prior $p$ and the data-generating distribution $\mathcal{D}$ form a normal-normal conjugate.


To get some insight about the effect of prior choice on sample selection, we consider the
following synthetic two-group setting.
For unit $i$ from group $g$, $g=1,2$, we consider the following data-generating distribution:
\begin{equation}
\begin{split}
&Y_{g,i}(0) \sim N(b_g-\tau_g/2, s_{0,g}^2),\qquad Y_{g,i}(1)\sim N(b_g+\tau_g/2, s_{1,g}^2)\\
&\tau=(\tau_1,\tau_2)^\top \sim N\p{\begin{pmatrix}0\\0 \end{pmatrix},\begin{pmatrix}
    \sigma_1^2 & \rho\sigma_1\sigma_2\\
    \rho\sigma_1\sigma_2 & \sigma_2^2
\end{pmatrix}} ,\qquad p(b_g)\propto 1
\end{split}
\end{equation}
Throughout, we set $s_{0,g}=s_{1,g}=1$, and $\sigma_g=0.1$ for all $g$, and investigate the change in the resulting optimal sample selection as we vary $\alpha_1$. As in Section \ref{subsec:sample_size}, we determine the sample size using \eqref{eq:sample_size} based on a power calculation for the hypothesis test $H_0:\tau=0$ versus $H_1:\tau=0.1$, resulting in a total sample size of $N=1713$.

\begin{figure}[t]
\centering
\includegraphics[width=\linewidth]{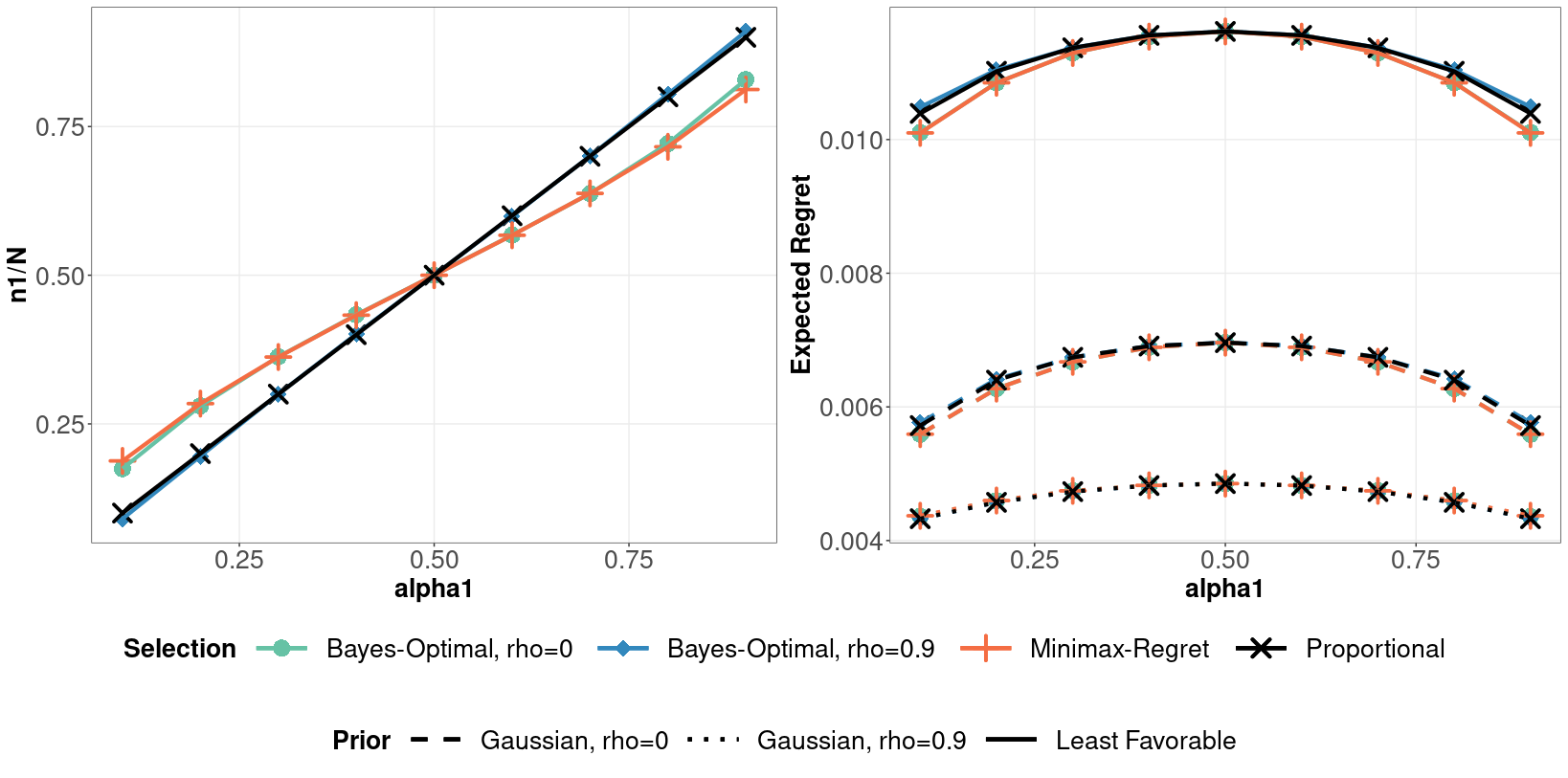}
\caption{Bayes-optimal sample selection under Gaussian prior with $\rho=0$ and $\rho=0.9$, and under the least favorable prior (minimax-regret sample selection), along with the baseline proportional selection. Left: Proportion of group 1 in sample vs. in population. Right: Bayesian expected regret of the selections under Gaussian prior with $\rho=0$ and $\rho=0.9$, and under the least favorable prior.
}
\label{fig:bayesian_synthetic}
\end{figure}

Figure \ref{fig:bayesian_synthetic} displays (1) the Bayes-optimal sample selection with $\rho=0$, (2) the Bayes-optimal sample selection with $\rho=0.9$, (3) the minimax-regret sample selection, and their corresponding performances under three different priors.
Given the specific prior and budget constraints used here, the Bayes-optimal selection under Gaussian prior with $\rho=0$ is close to the minimax-regret selection that oversamples the minority group, while the Bayes-optimal selection under Gaussian prior with $\rho=0.9$ is close to the proportional selection.\footnote{As one would expect, using $\rho\in(0,0.9)$ here leads to selections in between proportional selection and minimax selection.}
When evaluated under the least favorable prior and the Gaussian prior with $\rho=0$, the minimax-regret selection and the Bayes-optimal selection with $\rho=0$ perform better than the other two selections, especially when the minority group has a small weight. Under the Gaussian prior with $\rho=0.9$, all selections show similar performances. As anticipated, the expected regret evaluated under the prior associated with the minimax-regret selection is consistently the highest, since it is the least favorable one.

These findings suggest that, in a simple two-group setting where the sample size is chosen such that the experiment has power for typical effect sizes under the prior, our minimax rules form a reasonable starting point for a discussion on sample selection: With uncorrelated priors ($\rho = 0$) our recommendations closely match the Bayes-optimal ones, while with highly correlated priors ($\rho = 0.9$), all methods do comparably well (because it's possible to share information across groups). We do note, however, that it is also possible to construct examples where the Bayes-optimal rule diverges from our recommendations; for example, in the appendix we show a highly under-powered example in which the Bayes-optimal rule chooses to over-sample the majority group. Getting a deeper understanding of instance-level properties of the minimax-regret rules---and of when this approach is aligned with Bayesian sample selection with an informative prior---presents an interesting direction for further work.

\section{Discussion}

To guide sample selection subject to a fixed budget, we employ a minimax-regret framework that aims to minimize the maximum expected gap between the achieved utility and the maximum potential utility attainable under any decision rule had the true data-generating distribution been determined by an adversary. 
Within this minimax-regret framework, we explore the cases when the decision-maker can and cannot make separate decisions across groups, and the case when the target utility is utilitarian and egalitarian. We give a summary of all minimax-regret sample selections we investigated in Table \ref{tab:summary}. Our results highlight that optimal sample selection can vary significantly based on different practices and objectives. 

\begin{table}
    \centering
    \begin{tabular}{|c|c|c|}
        \hline
         & Aggregated Utility & Egalitarian Utility \\
        \hline
        &&\\[-1em]
        Joint Decision & $n_g\sim \alpha_g N$  &  \\
        \hline
        &&\\[-1em]
        Separate Decision & $n_g\sim\alpha_g^{2/3}N$ & $n_g\sim N/G$ \\
        \hline
    \end{tabular}
\caption{Minimax-regret sample selections under different decision paradigms.}
\label{tab:summary}
\end{table}

When considering utilitarian social welfare, the weights $\alpha_g$ stand as an important factor for determining the optimal sample selection. A prevalent approach is to sample proportionally to $\alpha_g$. Our results affirm that such proportional sampling is minimax-regret when a joint decision is guided by a sample-wise difference-in-means estimator, under which it is crucial to ensure that the sample is representative of the overall population.

One common critique of proportional sampling is that it 
fails to take into account the differentiation in subgroup variances \citep{sharma2017pros}.
Indeed, we show that when the decision-maker is not restricted to a joint decision, subgroup variances start to play an important role in optimal sample selections. Moreover, the minimax-regret solution with separate decisions suggests sampling proportionally to $\alpha_g^{-2/3}$ and thus oversamples the minority groups, highlighting the importance of making qualitative decisions for each of the subgroups in the population.

While utilitarian social welfare is a natural and common choice, it does favor groups with large
proportion weights, which may sometimes be regarded as unfair. 
An alternative is to consider the egalitarian rule that suggests making decisions that maximize the minimum utility of all individuals in society under the least undesirable condition. 
Our findings demonstrate that this indeed results in equal sample allocation across all groups when the signal-to-noise ratio is the same across all groups. In scenarios where signal-to-noise ratios vary among groups, the resulting allocation suggests sampling proportionally to the variance of the treatment-control difference, ensuring an equal level of confidence in estimating the treatment effect across all groups.

One natural concern is whether the decision-theoretic mechanisms we propose for allocating samples to subgroups---given a fixed total sample budget---may interact with or change the sample size calculations commonly done for power analyses which typically assume population-proportional group allocation in the study design. Our resulting study designs will still satisfy the power calculations, but may be with respect to a reweighted subgroup population. We leave interesting considerations of experimental design that can jointly vary the total sample size, the subgroup allocation and the randomization over treatment and control, to future work.

Finally, in this paper, we considered welfare-optimizing sample selection through the lens of minimax-regret
decision rules. This approach is attractive in terms of its robustness and generality. However, in
settings where prior evidence about the effectiveness of a treatment is available, it may be desirable
to instead use Bayesian methods that incorporate this prior knowledge.
Another interesting direction would be to investigate optimal sample selection under non-linear objectives.
Models for cooperative bargaining can be used to motivate targeting expected log-utility
\citep{nash1950bargaining,kaneko1979nash}; and \citet{baek2021fair} recently studied implications
of targeting such objectives in bandit experiments.
It may also be of interest to consider a richer welfare model that takes into account potentially different risks and benefits for
study participants (who may risk direct harms from trying out a potentially dangerous treatment) and the
general population (whose risks and benefits are mediated by the overall study findings).


\ifec

\begin{acks}
This research was supported by the Stanford Graduate School of Business initiative on Business, Government \& Society.
\end{acks}

\bibliographystyle{ACM-Reference-Format}
\bibliography{references}

\newpage

\appendix

\else

\section*{Acknowledgment}

This research was supported by the Stanford Graduate School of Business initiative on Business, Government \& Society.

\bibliographystyle{plainnat}
\bibliography{references}

\newpage

\appendix
\begin{center}
\textbf{\Large Supplemental Materials} \\ 
\end{center}
\setcounter{equation}{0}
\setcounter{figure}{0}
\setcounter{table}{0}
\setcounter{page}{1}
\makeatletter
\renewcommand{\theequation}{S\arabic{equation}}
\renewcommand{\thefigure}{S\arabic{figure}}
\renewcommand{\thetable}{S\arabic{table}}
\renewcommand{\bibnumfmt}[1]{[S#1]}
\renewcommand{\citenumfont}[1]{S#1}

\fi

\section{A two-group normal-normal example}

In this section, we derive the posterior and decision distributions for a normal-normal conjugate prior with $G=2$. Specifically, we consider the case where $p(\tau) \sim N(0, \Sigma)$ and treatments are assigned in a 1:1 completely stratified design with at most $N$ total units. We assume that $Y_{g,i}(0)$ and $Y_{g,i}(1)$ are generated i.i.d. from Gaussian distributions $N(b_g - \tau_g/2, s_{g}^2)$ and $N(b_g + \tau_g/2, s_{g}^2)$, respectively.

We start by deriving a closed-form expression of the distribution of $\htau^\text{PM}$.


\begin{align}
\PP{\tau\cond D}
&= \int_b \PP{\tau, b\cond Y,W} db\nonumber\\
&\propto \int_b \PP{Y\cond \tau, b, W}p(\tau) db\nonumber\\
&\propto p(\tau) \int_b \prod_g \exp\cb{-\sum_{i:W_{g,i}=0}\frac{\p{Y_{g,i}-b_g+\tau_g/2}^2}{2s_{g}^2} }\exp\cb{-\sum_{i:W_{g,i}=1}\frac{\p{Y_{g,i}-b_g-\tau_g/2}^2}{2s_{g}^2} } db\nonumber\\
&\propto p(\tau) \int_b \prod_g\exp\cb{-\sum_{i:W_{g,i}=0}\frac{b_g^2-2b_gY_{g,i}-b_g\tau_g+\tau_gY_{g,i}+\tau_g^2/4}{2s_{g}^2}}\nonumber\\
&\qquad\qquad\exp\cb{-\sum_{i:W_{g,i}=1}\frac{b_g^2-2b_gY_{g,i}+b_g\tau_g-\tau_gY_{g,i}+\tau_g^2/4}{2s_{g}^2} }\nonumber\\
&\propto p(\tau)\prod_g\exp\cb{\frac{\tau_g\p{\sum_{i:W_{g,i}=1}Y_{g,i}-\sum_{i:W_{g,i}=0}Y_{g,i}}-n_g\tau_g^2/4}{2s_g^2} } \nonumber\\
&\qquad\qquad \int_b \prod_g\exp\cb{-\frac{n_g}{2s_g^2}\cdot b_g^2+\frac{\sum_{i:W_{g,i}=0}Y_{g,i}+\sum_{i:W_{g,i}=1}Y_{g,i}}{s_g^2}\cdot b_g }db\nonumber\\
&\propto p(\tau)\prod_g\exp\cb{\frac{\tau_g\p{\sum_{i:W_{g,i}=1}Y_{g,i}-\sum_{i:W_{g,i}=0}Y_{g,i}}-n_g\tau_g^2/4}{2s_g^2} } \nonumber\\
&\propto p(\tau)\prod_g \exp\cb{-\frac{n_g}{8s_g^2}\p{\tau_g^2-2\cdot \frac{\sum_{i:W_{g,i}=1}Y_{g,i}-\sum_{i:W_{g,i}=0}Y_{g,i}}{n_g/2}\cdot \tau_g } }.\label{eq:pos_kernel}
\end{align}
Note that the production term in (\ref{eq:pos_kernel}) above is the kernel of a multivariate Gaussian distribution with mean
\begin{equation}
\mu_D=\p{\frac{\sum_{i:W_{1,i}=1}Y_{1,i}-\sum_{i:W_{1,i}=0}Y_{1,i}}{n_1/2},\frac{\sum_{i:W_{2,i}=1}Y_{2,i}-\sum_{i:W_{2,i}=0}Y_{2,i}}{n_2/2}}^\top,
\end{equation}
and variance
\begin{equation}
\Sigma_D=\diag\p{\frac{4s_1^2}{n_1}, \frac{4s_2^2}{n_2}}.
\end{equation}
Thus, by standard result from Bayesian inference with conjugate priors \citep{gelman2013bayesian}, 
\begin{equation}
\htau^\text{PM}=\p{\Sigma^{-1}+\Sigma_D^{-1}}^{-1} \Sigma_D^{-1}\mu_D.
\end{equation}

Since
\begin{equation}
\begin{split}
&\EE{\frac{2}{n_g}\sum_{i=1}^{n_g} W_{g,i}Y_{g,i}-\frac{2}{n_g}\sum_{i=1}^{n_g}(1-W_{g,i})Y_{g,i}\cond \tau_g}\\
&\qquad\qquad= \EE{Y_{g,i}\cond W_{g,i}=1,\tau_g}-\EE{Y_{g,i}\cond W_{g,i}=0,\tau_g}\\
&\qquad\qquad= \tau_g,
\end{split}
\end{equation}
and
\begin{equation}
\begin{split}
&\Var{\frac{2}{n_g}\sum_{i=1}^{n_g} W_{g,i}Y_{g,i}-\frac{2}{n_g}\sum_{i=1}^{n_g}(1-W_{g,i})Y_{g,i}\cond \tau_g}\\
&\qquad\qquad= \Var{\EE{\frac{2}{n_g}\sum_{i=1}^{n_g} W_{g,i}Y_{g,i}-\frac{2}{n_g}\sum_{i=1}^{n_g}(1-W_{g,i})Y_{g,i}\cond \tau_g,W_{1:n_g}}\cond \tau_g}+\\
&\qquad\qquad\qquad\qquad \EE{\Var{\frac{2}{n_g}\sum_{i=1}^{n_g} W_{g,i}Y_{g,i}-\frac{2}{n_g}\sum_{i=1}^{n_g}(1-W_{g,i})Y_{g,i}\cond \tau_g,W_{1:n_g}}\cond \tau_g}\\
&\qquad\qquad= \Var{\frac{2}{n_g}\sum_{i=1}^{n_g} W_{g,i}(b_g+\tau_g)-\frac{2}{n_g}\sum_{i=1}^{n_g}(1-W_{g,i})b_g\cond \tau_g}+\\
&\qquad\qquad\qquad\qquad \EE{\frac{4}{n_g^2}\sum_{i=1}^{n_g} W_{g,i}s_{g}^2+\frac{4}{n_g^2}\sum_{i=1}^{n_g} W_{g,i}s_{g}^2\cond \tau_g}\\
&\qquad\qquad= \Var{\frac{2}{n_g}\sum_{i=1}^{n_g} (2W_{g,i}-1)b_g+\frac{2}{n_g}\sum_{i=1}^{n_g}W_{g,i}\tau_g\cond \tau_g}+\\
&\qquad\qquad\qquad\qquad \EE{\frac{4}{n_g^2}\sum_{i=1}^{n_g} W_{g,i}s_{g}^2+\frac{4}{n_g^2}\sum_{i=1}^{n_g} W_{g,i}s_{g}^2\cond \tau_g}\\
&\qquad\qquad= \frac{4s_g^2}{n_g},
\end{split}
\end{equation}
where the last equality holds due to that $\sum_{i=1}^{n_g} W_{g,i}=n_g/2$ in a stratified completely randomized experiment with units assigned treatment or control in a 1:1 ratio,
\begin{equation}
\htau^\text{PM}\cond\tau
\sim N\p{\p{\Sigma^{-1}+\Sigma_D^{-1}}^{-1} \Sigma_D^{-1}\tau, \p{\Sigma^{-1}+\Sigma_D^{-1}}^{-1} \Sigma_D^{-1}\p{\Sigma^{-1}+\Sigma_D^{-1}}^{-1} }.
\label{eq_supp:PM_distn}
\end{equation}
Thus, it is possible to calculate directly
$\EE{I(\htau^\text{PM}_g(D;n,\tau)>0)\cond \tau_g}
=\PP{\htau^\text{PM}_g(D;n,\tau)>0\cond \tau_g }$ using the cdf of a standard Gaussian distribution.

To calculate the expected regret, we note that
\begin{equation}
\begin{split}
H^p(\bfn) 
&= \EE[\tau\sim p]{ \EE[D]{R(\bfn,\delta^p)\cond \tau}}\\
&= \sum_g\alpha_g\EE[\tau\sim p]{\tau_gI(\tau_g>0)-\tau_g\PP[D]{\htau^\text{PM}_g>0\cond \tau_g }}.
\label{eq:e_utility}
\end{split}
\end{equation}
From \eqref{eq_supp:PM_distn}, it is possible to obtain
$\PP[D]{\htau^\text{PM}_g>0\cond \tau_g }$ using the cdf of a standard Gaussian distributional function.

\section{Additional Simulation Example for Bayes-Optimal Sample Selection}
\label{sec_supp:bayes_sim}

In this section, we provide additional experiments and extend the simulation in Section \ref{sec:bayesian_selection} to a highly under-powered setting. We use the same data-generating distribution detailed in Section \ref{sec:bayesian_selection}, but with a much smaller total sample size of $N=100$.
Figure \ref{fig:bayesian_synthetic_underpowered} displays (1) the Bayes-optimal sample selection with $\rho=0$, (2) the Bayes-optimal sample selection with $\rho=0.9$, (3) the minimax-regret sample selection, and their corresponding performances under three different priors.

In contrast to the minimax-regret selection, the Bayes-optimal selections here oversamples the majority group. This is because---given the specific prior and the small budget---attempting to learn from both groups may lead to suboptimal decisions for the entire population, whereas allocating more samples to the majority group ensures that at least a significant portion of the population is treated appropriately.
Moreover, we observe increased oversampling of the majority group with greater data sharing, in which case allocating sufficient samples to one group can benefit decision-making for the entire population. 
We emphasize, however, that if the chosen prior is misspecified, the sample selections induced by the prior could result in large or even extreme regret in an adversarial environment. For example, in the right panel of Figure \ref{fig:bayesian_synthetic_underpowered}, we see that the expected regret of the $\rho=0.9$ Bayes-optimal rule explodes when evaluated against the least-favorable prior that underlies our minimax-regret rule.

\begin{figure}[t]
\centering
\includegraphics[width=\linewidth]{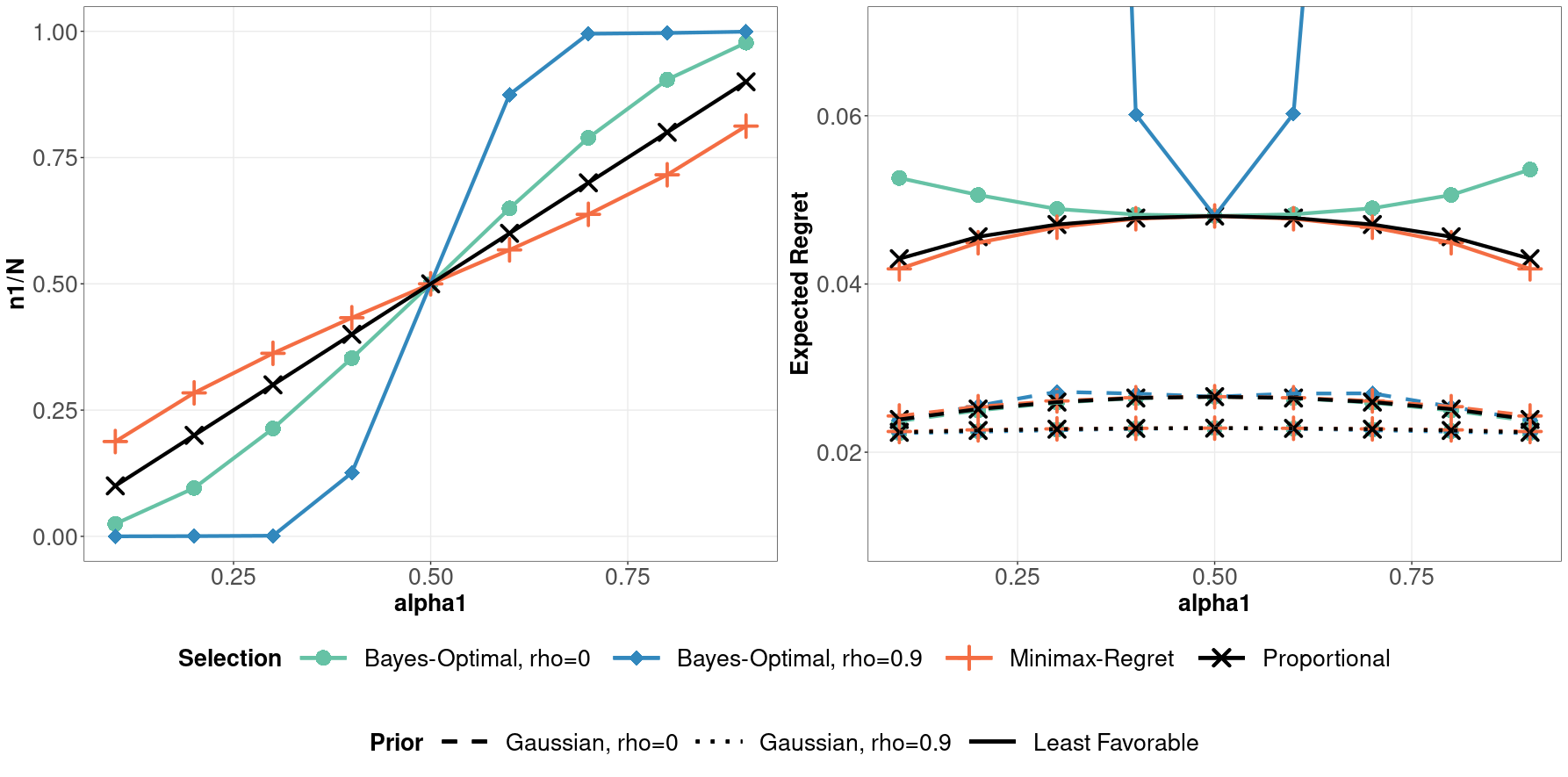}
\caption{Bayes-optimal sample selection under Gaussian prior with $\rho=0$ and $\rho=0.9$, and under the least favorable prior (minimax-regret sample selection), along with the baseline proportional selection. Left: Proportion of group 1 in sample vs. in population. Right: Bayesian expected regret of the selections under Gaussian prior with $\rho=0$ and $\rho=0.9$, and under the least favorable prior.}
\label{fig:bayesian_synthetic_underpowered}
\end{figure}

\section{Proof of Theorems}
\label{sec_supp:proof}

\subsection{Proof of Theorem \ref{theorem:minimax_proportional}}
\label{proof:minimax_proportional}

We start by finding the optimal decision rule under this restricted class of decisions. Note that 
\begin{equation}
\begin{split}
\Tilde{\delta}^* &= \argmax_{\Tilde{\delta}\in\cb{0,1}}U(\tau,\Tilde{\delta})\\
&= \argmax_{\Tilde{\delta}\in\cb{0,1}}\Tilde{\delta}\sum_{g=1}^G \alpha_g\tau_g\\
&= I\p{\sum_{g=1}^G \alpha_g\tau_g>0)}.
\end{split}
\end{equation}
Denote for shorthand $\halpha_g=n_g/N$, $\forall g$. 
Since 
\begin{equation}
\begin{split}
\EE{\htau^\text{DM,p}}
&=\sum_{g=1}^G\frac{n_g}{N}\p{\frac{1}{n_g/2}\EE{\sum_{i=1}^{n_g}W_{g,i} Y_{g,i}} - \frac{1}{n_g/2}\EE{\sum_{i=1}^{n_g}(1-W_{g,i}) Y_{g,i}}}\\
&=\sum_{g=1}^G\halpha_g\tau_g,
\end{split}
\end{equation}
\begin{equation}
\begin{split}
\Var{\htau^\text{DM,p}}
&=\sum_{g=1}^G\frac{n_g^2}{N^2}\p{\Var{\frac{1}{n_g/2}\sum_{i=1}^{n_g}W_{g,i} Y_{g,i} - \frac{1}{n_g/2}\sum_{i=1}^{n_g}(1-W_{g,i}) Y_{g,i}}}\\
&=\sum_{g=1}^G\frac{n_g^2}{N^2} \frac{2(s_{0,g}^2+s_{1,g}^2)}{n_g}\\
&=\frac{2}{N}\sum_{g=1}^G \halpha_g(s_{0,g}^2+s_{1,g}^2),
\end{split}
\end{equation}
$\htau^\text{DM,p}\sim N(\sum_g\halpha_g\tau_g,  2\sum_g\halpha_g(s_{0,g}^2+s_{1,g}^2)/N)$, and
\begin{equation}
\begin{split}
\EE{R(\bfn,\Tilde{\delta}^{DM}(D)) }
&= \EE{\sum_{g=1}^G \alpha_g\cdot\tau_g\cb{I\p{\sum_{g=1}^G \alpha_g\tau_g>0)}-I(\htau^\text{DM,p}> 0) }}\\
&= \sum_{g=1}^G \alpha_g\cdot\tau_g\cb{I\p{\sum_{g=1}^G \alpha_g\tau_g>0)}\Phi^c(t^\dag) - I\p{\sum_{g=1}^G \alpha_g\tau_g\le0} \p{1-\Phi^c(t^\dag)}},
\end{split}
\end{equation}
where
\begin{equation}
t^\dag=\frac{\sqrt{N}\sum_{g=1}^G\halpha_g\tau_g}{\sqrt{2\sum_{g=1}^G\halpha_g(s_{0,g}^2+s_{1,g}^2) }}.
\end{equation}

First, we solve the constrained maximization problem
\begin{equation}
\begin{split}
    \max_{\tau\in\RR^G}\quad & \EE{R(\bfn,\Tilde{\delta}^{DM}(D)) } \\
    \text{s.t.}\quad &
    t^\dag=\frac{\sqrt{N}\sum_{g=1}^G\halpha_g\tau_g}{\sqrt{2\sum_{g=1}^G\halpha_g(s_{0,g}^2+s_{1,g}^2) }}.
\end{split}
\end{equation}
Since the problem is fully symmetric, we will only consider the case when $\sum_{g=1}^G \alpha_g\tau_g>0$, and the case when $\sum_{g=1}^G \alpha_g\tau_g\le0$ follows with a similar derivation. When $\sum_{g=1}^G \alpha_g\tau_g>0$
\begin{equation}
\begin{split}
&\frac{\partial \EE{R(\bfn,\Tilde{\delta}^{DM}(D)) }}{\partial \tau_g}=
\alpha_g\Phi^c(t^\dag)-\alpha_g\phi(t^\dag)\cdot \frac{\sqrt{N}\halpha_g\tau_g}{\sqrt{2\sum_{g'=1}^G\halpha^2_{g'}(s_{0,g'}^2+s_{1,g'}^2) }}.
\end{split}
\end{equation}
Thus, we need to find $\tau_g$ and $\lambda$ that satisfy
\begin{equation}
\begin{split}
\tau_g&=
\frac{\alpha_g\Phi^c(t^\dag)+\lambda}{\alpha_g\phi(t^\dag)}\cdot
\frac{\sqrt{2\sum_{g'=1}^G\halpha_{g'}^2(s_{0,g'}^2+s_{1,g'}^2) }}{\sqrt{N}\halpha_g},\\
t^\dag&=\sum_g\frac{\alpha_g\Phi^c(t^\dag)+\lambda}{\alpha_g\phi(t^\dag)}.
\end{split}
\end{equation}
Solving the above system of equations gives rise to
\begin{equation}
\tau^*_g=\frac{\sqrt{2\sum_{g'=1}^G\halpha_{g'}^2(s_{0,g'}^2+s_{1,g'}^2) }}{\halpha_g\sqrt{N}}\cb{\frac{\Phi^c(t^\dag)}{\phi(t^\dag)}+\p{\frac{t^\dag}{\alpha_g}-\frac{G\Phi^c(t^\dag)}{\alpha_g\phi(t^\dag)}}\cdot\p{\sum_{g'}\frac{1}{\alpha_{g'}}}^{-1} }.
\end{equation}
Bringing $\tau^*_g$ back into $\EE{R(\bfn,\Tilde{\delta}^{DM}(D)) }$, we get
\begin{equation}
\begin{split}
&\frac{\sqrt{N}}{\sqrt{2\sum_{g=1}^G\halpha_g(s_{0,g}^2+s_{1,g}^2) }}\EE{R(\bfn,\Tilde{\delta}^{DM}(D)) } \\
&\qquad\qquad =\sum_g\frac{\alpha_g}{\halpha_g}\cb{\frac{{\Phi^c(t^\dag)}^2}{\phi(t^\dag)}\p{1-\frac{G}{\alpha_g}\cdot\p{\sum_{g'}\frac{1}{\alpha_{g'}}}^{-1} }+ 
\frac{t^\dag\Phi^c(t^\dag)}{\alpha_g}\cdot \p{\sum_{g'}\frac{1}{\alpha_{g'}}}^{-1} 
}\\
&\qquad\qquad =\cb{\sum_g\frac{\alpha_g}{\halpha_g}-G\p{\sum_{g}\frac{1}{\alpha_{g}}}^{-1}\p{\sum_g\frac{1}{\halpha_g}}}\frac{{\Phi^c(t^\dag)}^2}{\phi(t^\dag)} +
 \p{\sum_{g}\frac{1}{\alpha_{g}}}^{-1}\p{\sum_g\frac{1}{\halpha_g}}t^\dag\Phi^c(t^\dag)
\label{eq_supp:regret_prop_expression}
\end{split}
\end{equation}

Our goal is to show that 
\begin{equation}
    \argmin_{\halpha\in[0,1]^G:\sum_g \halpha_g\le 1}\max_{\tau\in\RR^G}\EE{R(\bfn,\Tilde{\delta}^{DM}(D)) }=(\alpha_1,\dots,\alpha_G)^\top.
\label{eq_supp:prop_goal}
\end{equation}
First note that ${\Phi^c(t^\dag)}^2/\phi(t^\dag)\to\infty$ as $t^\dag\to-\infty$. Thus, when 
$\sum_g\frac{\alpha_g}{\halpha_g}-G\p{\sum_{g}\frac{1}{\alpha_{g}}}^{-1}\p{\sum_g\frac{1}{\halpha_g}}\ne 0$,
the adversary can choose $\tau$ with $\sqrt{N}\sum_{g=1}^G\halpha_g\tau_g\to -\infty$ fast enough such that $\EE{R(\bfn,\Tilde{\delta}^{DM}(D)) }\to \infty$. As we have shown in proof of Lemma \ref{lemma:dm}, $\sup_{t^\dag\in\RR} t^\dag\Phi^c(t^\dag)=C_0\approx 0.17$. Thus, when $\sum_g\frac{\alpha_g}{\halpha_g}-G\p{\sum_{g}\frac{1}{\alpha_{g}}}^{-1}\p{\sum_g\frac{1}{\halpha_g}}= 0$, RHS of \eqref{eq_supp:regret_prop_expression} becomes $\p{\sum_{g}\frac{1}{\alpha_{g}}}^{-1}\p{\sum_g\frac{1}{\halpha_g}}C_0$. 

When $\halpha_g=\alpha_g, \forall g$, 
\begin{equation}
   \EE{R(\bfn,\Tilde{\delta}^{DM}(D)) }=\frac{\sqrt{2\sum_{g=1}^G\alpha_g(s_{0,g}^2+s_{1,g}^2) }}{\sqrt{N}}C_0. 
\end{equation}
Suppose by contradiction that $\argmin_{\halpha\in[0,1]^G:\sum_g \halpha_g\le 1}\max_{\tau\in\RR^G}\EE{R(\bfn,\Tilde{\delta}^{DM}(D)) }=\halpha^*$ with some $\halpha^*\ne (\alpha_1,\dots,\alpha_G)^\top$. Then $\halpha^*$ needs to satisfy that
\begin{equation}
    \sum_{g}\frac{1}{\alpha_{g}}>\sum_g\frac{1}{\halpha_g^*},
\label{eq_supp:prop_condition}
\end{equation}
{\sloppy But if \eqref{eq_supp:prop_condition} is true, there are also $\sum_g\frac{\alpha_g}{\halpha_g^*}>G$ and $G\p{\sum_{g}\frac{1}{\alpha_{g}}}^{-1}\p{\sum_g\frac{1}{\halpha_g^*}}<G$, and thus $\max_{\tau\in\RR^G}\EE{R(\bfn,\Tilde{\delta}^{DM}(D)) }\to\infty$, which is a contradiction. Thus, equation \eqref{eq_supp:prop_goal} holds and 
\begin{equation}
\argmin_{\bfn\in\widetilde{\mathcal{N}} }\max_{\tau\in\RR^{G}} \EE{R(\bfn,\Tilde{\delta}^{DM}(D)) }
=(\alpha_1N,\dots,\alpha_GN)^\top.
\end{equation}}
Again, since $\mathcal{N}\subset\widetilde{\mathcal{N}}$ and $(\alpha_1N,\dots,\alpha_GN)^\top\in \mathcal{N}$,
\begin{equation}
{n}^{\dag} \in \argmin_{\bfn\in\mathcal{N} }H^\dag(\bfn).
\end{equation}

\subsection{Proof of Theorem \ref{theorem:minimax_egalitarian}}
\label{proof:minimax_egalitarian}

First, since a separate decision for each of the groups is allowed, and both the data generation and decision processes are independent between groups, one can follow the same steps as in Proof of Lemma \ref{lemma:dm} and show that 
\begin{equation}
\inf_{\delta_g}\max_{\tau_g\in\RR} \EE[D]{R_g(n_g,\delta_g)}
=C_0\sqrt{\frac{2(s_{0,g}^2+s_{1,g}^2)}{n_g}},
\end{equation}
Thus, 
\begin{equation}
\begin{split}
H^\ddag(\bfn)
&= \inf_{\delta(D)}\max_{\tau\in\RR^{G}}\max_g \EE[D]{R_g(\bfn,\delta(D))}\\
&= \max_g\inf_{\delta(D)}\max_{\tau\in\RR^{G}} \EE[D]{R_g(\bfn,\delta(D))}\\
&= C_0\max_{g} \sqrt{\frac{2(s_{0,g}^2+s_{1,g}^2)}{n_g}}.  
\end{split}
\end{equation}
It then follows immediately that the optimal sample selection would balance the regret in each group such that
\begin{equation}
    \sqrt{\frac{2(s_{0,1}^2+s_{1,1}^2)}{n_1}}=\sqrt{\frac{2(s_{0,2}^2+s_{1,2}^2)}{n_2}}=\cdots=\sqrt{\frac{2(s_{0,G}^2+s_{1,G}^2)}{n_G}}.
\label{eq:egalitarian_condition}
\end{equation}
Combining \eqref{eq:egalitarian_condition} with the constraint $\sum_{g=1}^Gn_g= N$ and solving the system of equations yield 
\begin{equation}
\begin{split}
&\Tilde{n}_g^\ddag = \argmin_{\bfn\in\widetilde{\mathcal{N}}}
C_0\max_{g} \sqrt{\frac{2(s_{0,g}^2+s_{1,g}^2)}{n_g}}\\
&\Tilde{n}_g^\ddag = \frac{s_{0,g}^2+s_{1,g}^2}{\sum_{g=1}^G\p{s_{0,g}^2+s_{1,g}^2}}N.
\end{split}
\end{equation}
Again, combining the facts that $\min_{\bfn\in\mathcal{N} }H^\ddag\p{\bfn}=\Theta\p{N^{-1/2}}$
and
\begin{equation}
\begin{split}
&\inf_{\delta(D)}\max_{\tau\in\RR^{G}}\max_g\EE[D]{R_g(\bfn^*,\delta(D))}-\min_{\bfn\in\mathcal{N} }\inf_{\delta(D)}\max_{\tau\in\RR^{G}}\max_g \EE[D]{R_g(\bfn,\delta(D))}\\
&\qquad\qquad\le \max_g \frac{2C_0}{\sqrt{\Tilde{n}_g\cdot n^*_g}} \cdot \sqrt{\frac{2(s_{0,g}^2+s_{1,g}^2)} {\Tilde{n}_g+n^*_g}}\\
&\qquad\qquad= \oo\p{N^{-3/2}}
\end{split}
\end{equation}
gives rise to the claimed result.

\section{Proof of Lemmas}
\label{proof:lemmas}

\begin{proof}[Proof of Lemma \ref{lemma:dm}]
Since both the data generation and decision processes are independent between groups, it suffices to show that, for $g=1,\dots,G$,
\begin{equation}
\inf_{\delta_g}\max_{\tau_g\in\RR} \EE[D]{R_g(n_g,\delta_g)}
=C_0\sqrt{\frac{2(s_{0,g}^2+s_{1,g}^2)}{n_g}},
\end{equation}
and the decision rule induced by a threshold of the difference-in-means estimator in group $g$ attains this bound,
\begin{equation}
\begin{split}
&\EE[D]{R_g\p{n_g,\delta^{DM}_g}} \leq C_0\sum_{g=1}^G \alpha_g\sqrt{\frac{2(s_{0,g}^2+s_{1,g}^2)}{n_g}},
\end{split}
\end{equation}
where 
\begin{equation}
\begin{split}
&R_g\p{n_g,\delta_g}=\tau_g\cb{\delta_g^*-\delta_g },\\
&\delta_g^* = I\p{\tau_g>0} \in \argmax_{\delta_g\in\p{0,1}} \tau_g\delta_g.
\end{split}
\end{equation}

We start by noticing that the decision rule $\delta_g$ is a function of $\{Y_g,W_g\}$, where $Y_g\cond W_g=0 \sim N(b_g-\tau_g/2,s_{0,g}^2)$ and $Y_g\cond W_g=1 \sim N(b_g+\tau_g/2,s_{1,g}^2)$.
Since $\htau_g^{DM}$ is a difference in means between two independent normally-distributed samples of equal size,
$\htau_g^{DM}$ is a sufficient statistic for estimating $\tau_g$. Thus, it suffices to consider $\delta_g$ as a function of $\htau_g^{DM}$.
Besides, note that
\begin{equation}
    \htau^\text{DM}_g \sim N\p{\tau_g,2\p{s_{0,g}^2+s_{1,g}^2}/n_g }.
\end{equation}
It then follows from the results in \citet{karlin1956theory} and \citet{tetenov2012statistical} that considering the following smaller class of thresholded decision rules
\begin{equation}
    \delta_g \in \cb{I\p{\htau_g^{DM} > T}: T\in\RR}
\end{equation}
is essentially complete. This reduces our problem to showing
\begin{equation}
\inf_{T\in\RR}\max_{\tau_g\in\RR} \EE[D]{R_g\p{n_g,I\p{\htau_g^{DM} > T}}}
=C_0\sqrt{\frac{2(s_{0,g}^2+s_{1,g}^2)}{n_g}}.
\end{equation}

Next, we show that the decision rule induced by a threshold of the difference-in-means estimator
\begin{equation}
\delta^{DM}_g = I\p{\htau_g^{DM} > 0}
\end{equation} 
is regret-minimax. 
Notice that the expected regret of $\delta_g \in \cb{I\p{\htau_g^{DM} > T}: T\in\RR}$ is
\begin{equation}
\begin{split}
\EE[D]{R_g\p{n_g,\delta_g}}
&= I(\tau_g> 0)\cdot\abs{\tau_g}\cdot\PP{\htau^{DM}_g\le T}
+I(\tau_g\le 0)\cdot\abs{\tau_g}\cdot\PP{\htau^{DM}_g> T}\\
&= \abs{\tau_g}\cdot I(\tau_g> 0)\cdot
\Phi^c\p{\frac{\sqrt{n_g}\p{\abs{\tau_g}-T}}{\sqrt{2(s_{0,g}^2+s_{1,g}^2)}}}\\
&\qquad\qquad + \abs{\tau_g}\cdot I(\tau_g\le 0)\cdot
\Phi^c\p{\frac{\sqrt{n_g}\p{\abs{\tau_g}+T}}{\sqrt{2(s_{0,g}^2+s_{1,g}^2)}}}.
\end{split}
\end{equation}
where $\Phi^c(\cdot)$ is the complementary cumulative distribution function of a standard Gaussian distribution. 

We show that choosing $T=0$ is minimax-regret for any fixed non-trivial $\abs{\tau_g}$. 
Note that when $T=0$,
\begin{equation}
\begin{split}
\EE[D]{R_g\p{n_g,\delta_g}}
&= \abs{\tau_g}\cdot 
\Phi^c\p{\frac{\sqrt{n_g}\abs{\tau_g}}{\sqrt{2(s_{0,g}^2+s_{1,g}^2)}}}.
\end{split}
\end{equation}
Suppose by contradiction that $T=0$ is not minimax-regret. If the minimax-regret choice is some $T>0$, then by choosing a positive treatment effect $\tau_g>0$, the adversary achieves
\begin{equation}
\begin{split}
\EE[D]{R_g\p{n_g,\delta_g}}
&= \abs{\tau_g}\cdot 
\Phi^c\p{\frac{\sqrt{n_g}\p{\abs{\tau_g}-T}}{\sqrt{2(s_{0,g}^2+s_{1,g}^2)}}}>\abs{\tau_g}\cdot 
\Phi^c\p{\frac{\sqrt{n_g}\abs{\tau_g}}{\sqrt{2(s_{0,g}^2+s_{1,g}^2)}}},
\end{split}
\end{equation}
so it can't be that $T>0$. 
If the minimax-regret choice is some $T<0$, then by choosing a negative treatment effect $\tau_g<0$, the adversary achieves
\begin{equation}
\begin{split}
\EE[D]{R_g\p{n_g,\delta_g}}
&= \abs{\tau_g}\cdot 
\Phi^c\p{\frac{\sqrt{n_g}\p{\abs{\tau_g}+T}}{\sqrt{2(s_{0,g}^2+s_{1,g}^2)}}}>\abs{\tau_g}\cdot 
\Phi^c\p{\frac{\sqrt{n_g}\abs{\tau_g}}{\sqrt{2(s_{0,g}^2+s_{1,g}^2)}}},
\end{split}
\end{equation}
so it can't be that $T<0$ as well, which is a contradiction. Thus, the decision rule induced by a threshold of the difference-in-means estimator is minimax-regret. 

\begin{figure}[t]
\centering
\includegraphics[width=0.48\linewidth]{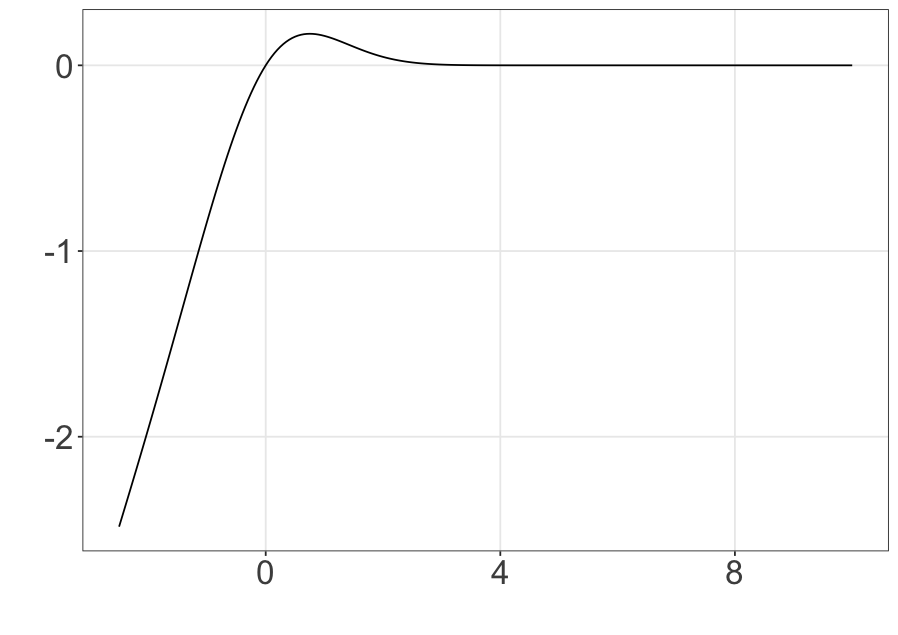}
\includegraphics[width=0.48\linewidth]{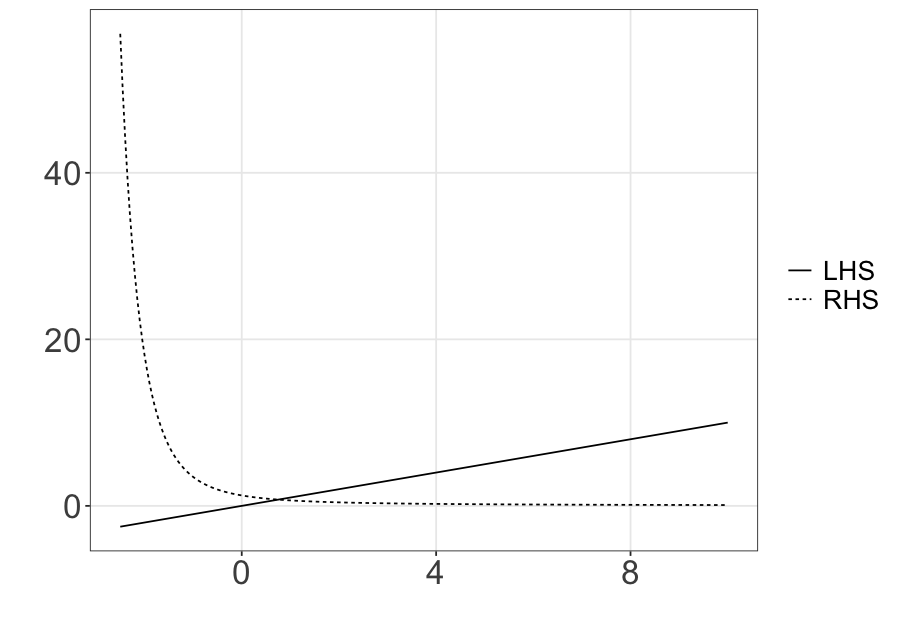}
\caption{$t_g^*$ maximizes \eqref{eq_supp:max_exp_regret_t}. Left: Function $t_g\cdot \Phi^c(t_g)$; Right: Representations of the Left Hand Side (LHS) and the Right Hand Side (RHS) of equation \eqref{eq_supp:t_eq}.}
\label{fig:minimax_proof_t_solution}
\end{figure}

Finally, we find
\begin{equation}
\max_{\tau_g\in\RR} \EE[D]{R_g(n_g,\delta_g^{DM})}.
\end{equation}
Denote for shorthand that $t_g=\frac{\sqrt{n_g}\abs{\tau_g}}{\sqrt{2(s_{0,g}^2+s_{1,g}^2)}}$. Note that $\EE[D]{R_g\p{n_g,\delta_g^{DM}}}$ can be reparametrized as
\begin{equation}
    \EE[D]{R_g\p{n_g,\delta_g^{DM}}} =  \sqrt{\frac{2(s_{0,g}^2+s_{1,g}^2)}{n_g}} \cdot t_g\cdot\Phi^c\p{t_g},
\label{eq_supp:max_exp_regret_t}
\end{equation}
and
\begin{equation}
    \max_{\tau_g\in\RR} \EE[D]{R_g\p{n_g,\delta_g^{DM}}} = \max_{t_g\in\RR_{\ge 0}} \sqrt{\frac{2(s_{0,g}^2+s_{1,g}^2)}{n_g}} \cdot t_g\cdot\Phi^c\p{t_g}.
\end{equation}
Taking derivatives with respect to $t_g$ and setting it to zero, we get
\begin{equation}
\begin{split}
t_g = \frac{\Phi^c\p{t_g}}{\phi(t_g)},
\label{eq_supp:t_eq}
\end{split}
\end{equation} 
where $\phi(\cdot)$ is the probability density function of a standard Gaussian distribution. 
Note that Equation \eqref{eq_supp:t_eq} has only a single root at $t_g^*\approx 0.75$. 
Since $t_g\cdot \phi(t_g)\to -\infty$ as $t_g\to-\infty$, $t_g\cdot \phi(t_g)\to 0$ as $t_g\to \infty$, and $t_g^*\cdot \Phi^c(t_g^*)\approx 0.17$, $t_g^*$ indeed maximizes $\EE[D]{R_g\p{n_g,\delta_g^{DM}}}$; See Figure \ref{fig:minimax_proof_t_solution} for illustrations of $t_g\cdot \Phi^c(t_g)$ and \eqref{eq_supp:t_eq}. Putting everything together, there exists a constant $C_0\approx 0.17$ such that
\begin{equation}
\inf_{\delta_g}\max_{\tau_g\in\RR} \EE[D]{R_g(n_g,\delta_g)}=C_0\sqrt{\frac{2(s_{0,g}^2+s_{1,g}^2)}{n_g}},
\end{equation}
and $\delta^{DM}_g$ attains this bound.

\end{proof}

\begin{proof}[Proof of Lemma \ref{lemma:selection}]
Since
\begin{equation}
\inf_{\delta_g}\max_{\tau_g\in\RR}\EE[D]{R_g\p{n_g,\delta_g}}= C_0\sqrt{\frac{2(s_{0,g}^2+s_{1,g}^2)}{n_g}},
\end{equation}
\begin{equation}
\inf_{\delta(D)}\max_{\tau\in\RR^G}\EE[D]{R\p{\bfn,\delta(D)}}=\sum_{g=1}^G C_0\alpha_g\sqrt{\frac{2(s_{0,g}^2+s_{1,g}^2)}{n_g}}.
\end{equation}
Recall that the constraint $\bfn\in\Tilde{\mathcal{N}}$ ensures that $\sum_{g=1}^Gn_g\le N$. Taking derivatives again with respect to $\bfn$, we get the following system of equations
\begin{equation}
\begin{split}
-\frac{C_0}{2}\alpha_g\sqrt{2(s_{0,g}^2+s_{1,g}^2)}\cdot n_g^{-3/2}
&=-\lambda,\qquad \forall g,\\
\sum_{g=1}^Gn_g&= N,
\end{split}
\label{eq_supp:lagrange}
\end{equation}
where $\lambda$ is the corresponding Lagrange multiplier for the constraint. Solving (\ref{eq_supp:lagrange}) gives rise to the claimed sample selection, with
\begin{equation}
\Tilde{n}_g=\frac{(s_{0,g}^2+s_{1,g}^2)^{1/3}\alpha_g^{\frac{2}{3}}N}{\sum_{g'=1}^G(s_{0,g'}^2+s_{1,g'}^2)^{1/3}\alpha_{g'}^{2/3}}.
\end{equation}
\end{proof}

\begin{proof}[Proof of Lemma \ref{lemma:gap_rate}]
Because $\mathcal{N}\subset\widetilde{\mathcal{N}}$, we have
\begin{equation}
C_0\sum_{g=1}^G \alpha_g\sqrt{\frac{2(s_{0,g}^2+s_{1,g}^2)}{\Tilde{n}_g}}
\leq 
\min_{\bfn\in{\mathcal{N}} }\inf_{\delta(D)}\max_{\tau\in\RR^{G}} \EE[D]{R(\bfn,\delta(D))}.
\end{equation}
Notice that $\Tilde{n}_g-2< n^*_g \le \Tilde{n}_g$. As a result,
\begin{equation}
\begin{split}
&\inf_{\delta(D)}\max_{\tau\in\RR^{G}}\EE[D]{R(\bfn^*,\delta(D))}-\min_{\bfn\in\mathcal{N} }\inf_{\delta(D)}\max_{\tau\in\RR^{G}} \EE[D]{R(\bfn,\delta(D))}\\
&\qquad\qquad\le C_0\sum_{g=1}^G \alpha_g\sqrt{\frac{2(s_{0,g}^2+s_{1,g}^2)}{n^*_g}} - C_0\sum_{g=1}^G \alpha_g\sqrt{\frac{2(s_{0,g}^2+s_{1,g}^2)}{\Tilde{n}_g}}\\
&\qquad\qquad= C_0\sum_{g=1}^G \alpha_g\sqrt{2(s_{0,g}^2+s_{1,g}^2)} \left( (n^*_g)^{-1/2} -(\Tilde{n}_g)^{-1/2} \right)\\
&\qquad\qquad= C_0\sum_{g=1}^G \alpha_g\sqrt{2(s_{0,g}^2+s_{1,g}^2)} \left( (n^*_g)^{-1/2} - (n^*_g)^{-1/2} (1 + \frac{\Tilde{n}_g - n^*_g}{n^*_g})^{-1/2} \right) \\
&\qquad\qquad\le C_0\sum_{g=1}^G \alpha_g\sqrt{2(s_{0,g}^2+s_{1,g}^2)}  (n^*_g)^{-1/2}   \left(\frac{\Tilde{n}_g - n^*_g}{2 n^*_g} \right) \\
&\qquad\qquad= \oo\p{N^{-3/2}}.
\end{split}
\end{equation}
where the last inequality follows from the generalized Bernoulli's inequality that $(1+x)^a\ge 1+ax$ for $x>-1$, $a<0$.
\end{proof}

\end{document}